\documentclass[11pt,letterpaper]{article}

\usepackage[in]{fullpage}
%!TEX root = main.tex

\usepackage[utf8]{inputenc}
\usepackage[T1]{fontenc}
\usepackage[english]{babel}
\usepackage[hidelinks]{hyperref}       % hyperlinks
\hypersetup{
	colorlinks,
	linkcolor={red!50!black},
	citecolor={blue!70!black},
}

\usepackage{url}            % simple URL typesetting
\usepackage{booktabs}       % professional-quality tables
\usepackage{amsfonts}       % blackboard math symbols
\usepackage{nicefrac}       % compact symbols for 1/2, etc.
\usepackage{microtype}      % microtypography
\usepackage{amsmath, amsthm, amssymb}
\usepackage{fancybox}
\usepackage{graphicx}
\usepackage{float}
\usepackage{algorithm}
\usepackage{color}
\usepackage{tikz}
\usepackage{pgfplots}
\usepackage{array}
\usepackage{subcaption}
\usepackage[textsize=scriptsize,disable]{todonotes}
\usepackage{enumitem}
\usepackage{wrapfig}
\usepackage{enumitem}
\usepackage{etoolbox}
\usepackage{diagbox}
\usepackage{cleveref}

\interfootnotelinepenalty=10000

\newtheoremstyle{slplain}% name
  {.4\baselineskip\@plus.1\baselineskip\@minus.1\baselineskip}% Space above
  {.3\baselineskip\@plus.1\baselineskip\@minus.1\baselineskip}% Space below
  {\itshape}% Body font
  {}%Indent amount (empty = no indent, \parindent = para indent)
  {\bfseries}%  Thm head font
  {.}%       Punctuation after thm head
  { }%      Space after thm head: " " = normal interword space;
        %       \newline = linebreak
  {}%       Thm head spec
\theoremstyle{slplain} % italics

\newtheorem*{definition*}{Definition}
\newtheorem*{theorem*}{Theorem}
\newtheorem{theorem}{Theorem}[section]
\newtheorem{lemma}[theorem]{Lemma}

\newtheorem{corollary}[theorem]{Corollary}
\newtheorem{definition}[theorem]{Definition}

\makeatletter
\newtheorem*{rep@theorem}{\rep@title}
\newcommand{\newreptheorem}[2]{%
\newenvironment{rep#1}[1]{%
 \def\rep@title{#2 \ref{##1}}%
 \begin{rep@theorem}}%
 {\end{rep@theorem}}}
\makeatother

\newreptheorem{theorem}{Theorem}
\newreptheorem{lemma}{Lemma}
\newreptheorem{claim}{Claim}
\newreptheorem{corollary}{Corollary}
\newreptheorem{proposition}{Proposition}

\theoremstyle{definition}

\theoremstyle{plain} % choose from plain/definition/remark

\numberwithin{equation}{section}

\newtheoremstyle{etplain}% name
  {.0\baselineskip\@plus.1\baselineskip\@minus.1\baselineskip}% Space above
  {.0\baselineskip\@plus.1\baselineskip\@minus.1\baselineskip}% Space below
  {\itshape}% Body font
  {}%Indent amount (empty = no indent, \parindent = para indent)
  {\bfseries}%  Thm head font
  {.}%       Punctuation after thm head
  { }%      Space after thm head: " " = normal interword space;
        %       \newline = linebreak
  {}%       Thm head spec

\renewcommand\bar\overline
\renewcommand\epsilon\varepsilon
%\addtolength{\parskip}{-0.5mm}
%\setlength{\textfloatsep}{0.03cm}

\usepackage[noend]{algorithmic}

\bibliographystyle{plain}% the recommended bibstyle

\newcommand{\newpar}[1]{\medskip\noindent{\bf #1}\ \ }
\newenvironment{prevproof}[2]{\noindent {\bf {Proof of {#1}~\ref{#2}.}}}{$\hfill\qed$\vskip \belowdisplayskip}

\begin{document}

\title{
	Capacitated Dynamic Programming: \protect\\
	Faster Knapsack and Graph Algorithms
}

\author{
  Kyriakos Axiotis\\
  MIT\\
  \texttt{kaxiotis@mit.edu} 
  \and
  Christos Tzamos\\
  University of Wisconsin-Madison\\
  \texttt{tzamos@wisc.edu}
}
\date{}

\maketitle

\begin{abstract}

One of the most fundamental problems in Computer Science is the \emph{Knapsack} problem. Given a set of $n$ items with different weights and values, it asks to pick the most valuable subset whose total weight is below a capacity threshold $T$. Despite its wide applicability in various areas in Computer Science, Operations Research, and Finance, the best known running time for the problem is $O(T n)$. The main result of our work is an improved algorithm running in time $O(TD)$, where $D$ is the number of distinct weights. Previously, faster runtimes for Knapsack were only possible when both weights and values are bounded by $M$ and $V$ respectively, running in time $O(nMV)$ \cite{Pisinger99}. In comparison, our algorithm implies a bound of $O(n M^2)$ without any dependence on $V$, or $O(n V^2)$ without any dependence on $M$. Additionally, for the unbounded Knapsack problem, we provide an algorithm running in time $O(M^2)$ or $O(V^2)$.
Both our algorithms match recent conditional lower bounds shown for the Knapsack problem \cite{CMWW17,KPS17}.

We also initiate a systematic study of general \emph{capacitated dynamic programming}, of which Knapsack is a core problem. This problem asks to compute the maximum weight path of length $k$ in an edge- or node-weighted directed acyclic graph. In a graph with $m$ edges, these problems are solvable by dynamic programming in time $O(k m)$, and we explore under which conditions the dependence on $k$ can be eliminated. We identify large classes of graphs where this is possible and apply our results to obtain linear time algorithms for the problem of $k$-sparse $\Delta$-separated sequences. The main technical innovation behind our results is identifying and exploiting concavity that appears in relaxations and subproblems of the tasks we consider.
\end{abstract}

%!TEX root =  main.tex
\section{Introduction}
A large number of problems in Computer Science can be formulated as finding the optimal subset of items to pick in order to maximize a given objective subject to capacity constraints.

A core problem in this class is the \emph{Knapsack problem}: In this problem, each of the $n$ items has a value and a weight and the objective is to maximize the total value of the selected items while having total weight at most $T$.

A standard approach for solving such capacitated problems is to use dynamic programming.
Specifically, the dynamic programming algorithm keeps a state that tracks how much of the available capacity has already been exhausted.
The runtime of these algorithms typically incurs a multiplicative factor equal to the total capacity.
In particular, in the case of the Knapsack problem the classical dynamic programming algorithm due to Bellman~\cite{Bellman57} has a runtime of $O(Tn)$.

In contrast, uncapacitated problems do not restrict the number of elements to be selected, but charge an extra cost for each one of them (i.e. they have a \emph{soft} as opposed to a
\emph{hard} capacity constraint). 
The best known algorithms for these problems are usually much faster than the ones for their capacitated counterparts, i.e. for the uncapacitated version of knapsack one would need to pick all items whose value is larger than their cost. Therefore a natural question that arises is whether or when the additional dependence of the runtime on the capacity is really necessary.

In this work, we make progress towards answering this question by exploring when this dependence can be improved or completely eliminated.

\newpar{Knapsack} We first revisit the Knapsack problem and explore under which conditions we can obtain faster algorithms than the standard dynamic programming algorithm.

Despite being a fundamental problem in Computer Science, no better algorithms are known in the general case for over 60 years and it is known to be notoriously hard to improve upon.
The best known algorithm for the special case where both the weights and the values of the items are small and bounded by $M$ and $V$ respectively, is a result by Pisinger~\cite{Pisinger99} who presents an algorithm with runtime $O(n M V)$.

Even for the subset sum problem, which is a more restricted special case of knapsack where the value of every item is equal to its weight, the best known algorithm beyond the textbook algorithm by Bellman~\cite{Bellman57} was also an algorithm by Pisinger~\cite{Pisinger99} which runs in time $O(n M)$ until significant recent progress by Bringmann~\cite{Bringmann17} and Koiliaris and Xu~\cite{KX17} was able to bring its the complexity down to $\widetilde{O}(n + T)$.

However, recent evidence shows that devising a more efficient algorithm for the general Knapsack problem is much harder. Specifically, \cite{CMWW17, KPS17}
reduce the $(\max,+)$-convolution problem to Knapsack, proving that any truly
subquadratic algorithm for Knapsack (i.e. $O((n + T)^{2-\epsilon})$) would imply a truly subquadratic algorithm for the $(\max,+)$-convolution problem. 
The problem of $(\max,+)$-convolution is a fundamental primitive inherently
embedded into a lot of problems and has been used as evidence for hardness for various problems in the last few years (e.g. \cite{CMWW17, KPS17, BIS17}).
However, an important open question remains here: \emph{Can we get faster algorithms that circumvent this conditional lower bound?}

We answer this question affirmatively by providing an algorithm running in time $O(TD)$, where $D$ is the number of \emph{distinct} weights. 
Our algorithm is deterministic and computes the optimal Knapsack value for \emph{all} capacities $t$ from $1$ to $T$.
Since $D\leq n$, its runtime either matches (for $D=\Theta (n)$) or yields an improvement (for $D=o(n)$) over Bellman's algorithm \cite{Bellman57}, for all parameter regimes.
It also directly implies runtimes of 
$O(TM)$\footnote{Concurrent and independent work by Bateni, Hajiaghayi, Seddighin, and Stein \cite{BHSS18} also obtains an algorithm running in time $\widetilde{O}(TM)$, 
as well as an algorithm running in time $\widetilde{O}(TV)$.
In comparison to ours, their $\widetilde{O}(TM)$ algorithm is randomized and computes the answer only for a single capacity $T$.},
$O(nM^2)$\footnote{Eisenbrand and Weismantel \cite{EW18} 
develop fast algorithms for Integer Programming. Concurrently and independently, they also obtain an algorithm for Knapsack
that runs in time $O(nM^2)$. They provide a structural property of Knapsack using the Steinitz Lemma that enables us to remove logarithmic factors in $T$ from our results for Unbounded Knapsack
(Theorem~\ref{unbounded}), as they reduce to the case $T=\Theta(M^2)$. Combined with Theorem~\ref{knapsack}, this also implies an $O(M^3)$ algorithm for Knapsack.}, 
and $O(nV^2)$, and therefore also yields an improvement over the $O(nMV)$ algorithm of Pisinger \cite{Pisinger99}.

Our algorithm can be summarized as follows: First, it partitions the items into $D$ sets according to their weights and solves the knapsack problem in each set of the partition for every possible capacity up to T. This can be done efficiently in $O(T)$ time as all items in each set have the same weight and thus knapsack can be greedily solved in those instances. Having a sequence of solutions for every capacity level for each set of items allows us to obtain the overall solution by performing $(\max,+)$-convolutions among them. 
Even though it is not known whether computing general $(\max,+)$-convolutions in truly sub-quadratic time is possible,
we exploit the inherent concavity of the specific family of sequences produced by our algorithm to perform this in linear time. We present our results in Section~\ref{s:01knap}.

In addition to the general Knapsack problem studied above, we also consider the \emph{Unbounded Knapsack} problem where there are infinite copies of every item. In Section~\ref{s:Unknap}, we present novel algorithms for Unbounded Knapsack with running times $O(M^2)$\footnote{Jansen and Rohwedder \cite{JR18} extend the results of \cite{EW18} for Integer Programming
and also concurrently and independently obtain an algorithm for Unbounded Knapsack 
running in time $O(M^2)$.} and $O(V^2)$, where $M$ is the maximum weight and $V$ is the maximum value of any item. Our algorithm again utilizes $(\max,+)$-convolutions of short sequences to compute the answer and interestingly is only pseudo-polynomial with respect to the maximum weight $M$ or the maximum value $V$ and not the capacity $T$.

Our results are summarized in Table~\ref{tab:results}.

\begin{table*}[t]
\def\arraystretch{1.2}
\centering
\begin{tabular}{ m{6cm}l l}
    \hline
    \textbf{Setting} & \textbf{Our Results} & \textbf{Conditional Lower bounds} \\
    \hline\hline
    \textbf{Knapsack}\: &  &  \\
    No bounds on weights or values\: & $O(TD)$ \textbf{\small{[Theorem~\ref{knapsack}]}} & $\Omega((TD)^{1-o(1)})$ \cite{CMWW17,KPS17} \\
    Weights bounded by $M$ \: & $O(TM)$ \textbf{\small{[Corollary~\ref{tmcorol}]}} & $\Omega((TM)^{1-o(1)})$ \cite{CMWW17,KPS17} \\
    Values bounded by $V$ \: & $O(nV^2)$ \textbf{\small{[Corollary~\ref{vcorol}]}} & \ \ \ \ \ \ \ \ \textbf{--} \\
    \hline
    \textbf{Unbounded Knapsack}\: &  &  \\
    Weights bounded by $M$ \: & $O(M^2)$ \textbf{\small{[Corollary~\ref{munbound}]}}& $\Omega(M^{2-o(1)})$ \cite{CMWW17,KPS17} \\
    Values bounded by $V$ \: & $O(V^2)$ \textbf{\small{[Corollary~\ref{vunbound}]}}& \ \ \ \ \ \ \ \ \textbf{--} \\
    \hline
    \hline
\end{tabular}
\caption{\small Summary of our deterministic pseudopolynomial time results on the Knapsack problem with the corresponding known conditional lower bounds based on $(\min,+)$-convolution.}
\label{tab:results}
\end{table*}

It follows from the results of \cite{CMWW17, KPS17} that, under the $(\min,+)$-convolution hardness assumption, 
it is not possible to obtain faster runtimes for Knapsack under most of the parameterizations that we consider. 
This is because, even though the lower bound claimed in these results is $\Omega((n+T)^{2-o(1)})$, the hardness construction uses a Knapsack instance where $T$, $M$, and $D$ are $\Theta(n)$.

\newpar{Capacitated Dynamic Programming} In addition to our results on the knapsack problem, we move on to study capacitated problems in a more general setting.
Specifically, we consider the problem of computing a path of maximum reward between a pair of nodes in a weighted Directed Acyclic Graph, where
the capacity constraint corresponds to an upper bound on the length of the path.

This model has successfully been used for uncapacitated problems \cite{Wilber88, KPS17}, as well as capacitated problems with 
weighted adjacency matrices that satisfy a specific condition, namely the \emph{Monge property} \cite{AST94, Schieber98, BLP92}.
In \cite{BLP92}, it is shown that under this condition, the maximum weight of a path of length $k$ is \emph{concave} in $k$. Whenever such a concavity property is true, one can always solve the capacitated problem by replacing the capacity constraint with an ``equivalent'' cost per edge. This cost can be identified through a binary search procedure that checks whether the solution for the uncapacitated problem with this cost corresponds to a path of length $k$.

Our second main result, Theorem~\ref{characterization}, gives a complete characterization of such a concavity property for transitive node-weighted graphs. We show that this holds if and only if the following graph theoretic condition is satisfied:

\noindent \emph{For every path $a \rightarrow b \rightarrow c$ of length 2, and every node $v$, at least one of the edges $(a,v)$ and $(v,c)$ exists.}

To illustrate the power of our characterization, we show that a linear algorithm can be easily obtained for the problem of $k$-sparse $\Delta$-separated subsequences~\cite{HDC09} recovering recent results of \cite{BS17,MMS18}.

To complement our positive result which allows us to obtain fast algorithms for finding maximum weight paths of length $k$, we provide strong evidence of hardness for transitive node-weighted graphs which do not satisfy the conditions of our characterization. We base our hardness results on computational assumptions for the $(\max,+)$-convolution problem we described above.

Beyond node-weighted graphs, when there are weights on the edges, no non-trivial algorithms are known other than for Monge graphs. Even in that case, we show that linear time solutions exist only if one is interested in finding the max-weight path of length $k$ between only one pair of nodes. If one is interested in computing the solution in Monge graphs for a single source but all possible destinations, we provide an algorithm that computes this in near-linear time in the number of edges in the graph.

%!TEX root =  main.tex
\section{Preliminaries}

We first describe the problems of Knapsack and Unbounded Knapsack:

\begin{definition}[Knapsack]
Given $N$ items with weights $w_1, \dots, w_N\in[M]$ and values $v_1, \dots, v_N\in[V]$, and a parameter $T$, our goal is to find a set of items $S\subseteq [N]$ of total weight at most $T$
(i.e. $\sum\limits_{i\in S} w_i \leq T$) that maximizes the total value $\sum\limits_{i\in S} v_i$.
We will denote the number of distinct weights by $D$.
\end{definition}

\begin{definition}[Unbounded Knapsack]
Given $N$ items with weights $w_1, \dots, w_N\in[M]$ and values $v_1, \dots, v_N\in[V]$, and a parameter $T$, our goal is to find a multiset of items $S\subseteq [N]$ of total weight at most $T$
(i.e. $\sum\limits_{i\in S} w_i \leq T$) that maximizes the total value $\sum\limits_{i\in S} v_i$.
We will denote the number of distinct weights by $D$.
\end{definition}

Throughout the paper we make use of the following operation between two sequences called $(\max,+)$-convolution.

\begin{definition}[$(\max,+)$-convolution]
Given two sequences $a_0, \dots, a_n$ and $b_0, \dots, b_m$, the $(\max,+)$-convolution $a\oplus b$ between $a$ and $b$ is a sequence $c_0, \dots, c_{n+m}$ such that for any $i$
\[ c_i = \underset{0\leq j\leq i}{\max}\left\{ a_j + b_{i-j} \right\} \]
This operation is commutative, so it is also true that
\[ c_i = \underset{0\leq j\leq i}{\max}\left\{ a_{i-j} + b_{j} \right\} \]
\end{definition}

Our algorithms rely on uncovering and exploiting discrete concavity that is inherent in the problems we consider.

\begin{definition}[Concave, $k$-step concave]
A sequence $b_0, \dots, b_{n}$ is \emph{concave} if for all $i\in\{1,\dots,n-1\}$ we have $b_i - b_{i-1} \geq b_{i+1} - b_i$. A sequence is called
\emph{$k$-step concave} if its subsequence $b_0, b_k, b_{2k}, \dots $ is concave and for all $i$ such that $i\mod k \neq 0$, we have that
$b_i = b_{i-1}$.
\end{definition}

For the problems defined on graphs with edge weights, we typically assume that their weighted adjacency matrix is given by a Monge matrix.

\begin{definition}[Monge matrices]
A matrix $A\in\mathbb{R}^{n\times m}$ is called Monge if for any $1 \leq i \leq n-1$ and $1 \leq j \leq m-1$
\[ A_{i,j} + A_{i+1,j+1} \geq A_{i+1,j} + A_{i,j+1} \]
\end{definition}

\begin{definition}[Monge weights]
We will say that a Directed Acyclic Graph has \emph{Monge weights} if its adjacency matrix is a Monge matrix.
\end{definition}

In addition to our positive results, we present evidence of computational hardness assuming for
$(\max,+)$-convolution problem.

\begin{definition}[$(\max,+)$-convolution hardness]
The $(\max,+)$-convolution hardness hypothesis states that any algorithm that computes the $(\max,+)$-convolution of two sequences of size $n$
requires time $\Omega(n^{2-o(1)})$.
\end{definition}

A result of \cite{BIS17} shows that the $(\max,+)$-convolution problem is equivalent to the following problem:
Given an integer 
$n$ and three sequences $a_0, \dots, a_n$, $b_0, \dots, b_n$, and $c_0, \dots, c_n$, compute $\underset{i+j+k=n}{\max}\{a_{i}+b_{j}+c_{k}\}$. In our conditional lower bounds, we will be using this equivalent form of the conjecture.

\section{Knapsack}
In this section we present two novel pseudo-polynomial deterministic algorithms, one for Knapsack and one for Unbounded Knapsack. The running times of these algorithms significantly improve
upon the best known running times in the small-weight regime.
In essence, the main improvements stem from a more principled understanding and systematic use of $(\max,+)$-convolutions.
Thus, we show that devising faster algorithms for special cases of $(\max,+)$-convolution lies in the core of improving algorithms for the Knapsack problem.
In Theorem~\ref{knapsack}, we present an algorithm for Knapsack that runs in time $O(TD)$, where $T$ is the size of the knapsack and $D$ is the number of distinct item weights.
Then, in Theorem~\ref{unbounded}, we present algorithms for Unbounded Knapsack with runtimes $O(M^2)$ and $O(V^2)$, where $M$ is the maximum weight and $V$ the maximum value of some item.

\subsection{Knapsack}
\label{s:01knap}
Given $N$ items with weights $w_1, \dots, w_N\in[M]$ and values $v_1, \dots, v_N\in[V]$, and a parameter $T$, 
our goal is to find a set of items $S\subseteq [N]$ of total weight at most $T$
(i.e. $\sum\limits_{i\in S} w_i \leq T$) that maximizes the total value $\sum\limits_{i\in S} v_i$.
We will denote the number of distinct weights by $D$.

\begin{algorithm}[ht]
\caption{Knapsack}
\begin{algorithmic}[1]
\STATE Given items with weights in $\{w_1^{\#}, \dots, w_D^{\#}\}$
\STATE Partition items into sets $S_1, \dots, S_D$, so that $S_i = \{j\ |\ w_j = w_i^{\#}\}$
\FOR{$i\in[D]$ and $t\in[T]$}
\STATE $b^{(i)}_t\leftarrow$ solution for $S_i$ with knapsack size $t$
\ENDFOR
\STATE $s \leftarrow$ empty sequence
\FOR{$i\in[D]$}
\STATE $s\leftarrow s\oplus b^{(i)}$ using Lemma~\ref{stepconv}
\STATE Truncate $s$ after the $T$-th entry
\ENDFOR
\STATE Output $s_T$
\end{algorithmic}
\label{algo1}
\end{algorithm}

The following is the main theorem of this section.
\begin{theorem}
Algorithm~\ref{algo1} solves \emph{Knapsack} in time $O(T D)$.
\label{knapsack}
\end{theorem}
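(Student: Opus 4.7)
The plan is to break the proof into a correctness argument and a runtime argument, with the correctness relying on a decomposition by weight classes and the runtime relying on the fact that each per-class value sequence is $k$-step concave so that Lemma~\ref{stepconv} applies.

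For correctness, I would first observe that any feasible set $S \subseteq [N]$ decomposes uniquely as $S = S \cap S_1, \dots, S \cap S_D$ across the $D$ weight classes. Writing $t_i = \sum_{j\in S\cap S_i} w_j$, the value of $S$ is $\sum_i \sum_{j\in S\cap S_i} v_j \leq \sum_i b^{(i)}_{t_i}$, with equality achievable by picking the top-valued items in each class. Since each $b^{(i)}$ is nondecreasing, the optimal Knapsack value equals $\max\bigl\{\sum_{i=1}^D b^{(i)}_{t_i} : t_i\ge 0,\ \sum_i t_i = T\bigr\}$, which by definition of $(\max,+)$-convolution is exactly the $T$-th entry of $b^{(1)}\oplus b^{(2)}\oplus\cdots\oplus b^{(D)}$. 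The truncation after the $T$-th entry inside the loop is harmless because later convolutions only read the prefix up to index $T$ when producing the prefix of the result up to index $T$.

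For the runtime, the key structural claim is that the sequence $b^{(i)}$ is $w_i^{\#}$-step concave. All items in $S_i$ share weight $w=w_i^{\#}$, so at capacity $t$ the optimum is obtained greedily by taking the $\lfloor t/w\rfloor$ items in $S_i$ of largest value. Consequently $b^{(i)}_t = b^{(i)}_{t-1}$ whenever $t\not\equiv 0 \pmod{w}$, and the restriction of $b^{(i)}$ to multiples of $w$ is the running sum of a sorted-decreasing value list, hence concave. Computing the $b^{(i)}$ requires sorting each class once (total cost $O(N\log N)$, dominated by the rest) and then filling in $O(T)$ entries per class.

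The remaining and most delicate ingredient is the convolution step. Since each $b^{(i)}$ is $w_i^{\#}$-step concave, Lemma~\ref{stepconv} lets us compute $s \oplus b^{(i)}$ in time linear in the truncated output length, i.e.\ $O(T)$, even though the running accumulator $s$ need not itself be concave; this is where the main technical obstacle of the theorem lies, and I would simply invoke the lemma here rather than reprove it. Summing over the $D$ iterations of the outer loop yields a total convolution cost of $O(TD)$, which dominates the initial partitioning and per-class greedy computation, giving the claimed $O(TD)$ bound.
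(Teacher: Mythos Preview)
Your proposal is correct and follows essentially the same approach as the paper: partition by weight class, observe that each per-class optimal-value sequence $b^{(i)}$ is $w_i^{\#}$-step concave because of the greedy structure, and then fold in the classes one at a time via Lemma~\ref{stepconv}, each fold costing $O(T)$ after truncation. Your correctness argument is in fact more explicit than the paper's, which simply asserts that the full solution is the iterated $(\max,+)$-convolution of the $b^{(i)}$.

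One small caveat: your claim that the $O(N\log N)$ sorting cost is ``dominated by the rest'' is not literally true for all parameter regimes (e.g.\ many items of the same weight with tiny $T$ and $D$). The paper sidesteps this by tacitly assuming the values within each class are already sorted; if you want to be precise, note that in class $i$ only the top $\lfloor T/w_i^{\#}\rfloor$ values matter, so linear-time selection followed by sorting just those suffices, keeping the preprocessing within $O(N + TD)$.
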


\newpar{Overview} The main ingredient of this result is an algorithm for fast $(\max,+)$-convolution in the case that one of the two sequences is \emph{$k$-step concave}.
Using the SMAWK algorithm \cite{SMAWK} it is not hard to see how to do this in linear time for $k=1$. 
For the general case, we show that computing the $(\max,+)$-convolution of the two sequences can be decomposed into 
$\frac{n}{k}$ subproblems of computing the $(\max,+)$-convolution between two size-$k$ subsequences of the two sequences. Furthermore, 
the subsequence that came from the $k$-step concave sequence is concave and so each subproblem can be solved in time $O(k)$ and the total time spent in the subproblems will be $O(\frac{n}{k} k) = O(n)$.

\begin{lemma}
Given an arbitrary sequence $a_0, \dots, a_m$ and a concave sequence $b_0,\dots, b_n$ we can compute the $(\max,+)$ convolution between $a$ and $b$ in time $O(m + n)$.
\label{conv}
\end{lemma}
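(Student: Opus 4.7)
The plan is to cast the $(\max,+)$-convolution as the problem of computing all row maxima of a specifically constructed matrix, and then exploit concavity of $b$ to show that this matrix is totally monotone so that the SMAWK algorithm~\cite{SMAWK} returns all row maxima in linear time.

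Concretely, I would define the $(m+n+1)\times(m+1)$ matrix $M$ by
\[
M_{i,j}=\begin{cases} a_j + b_{i-j}, & \text{if } 0\le i-j\le n,\\ -\infty, & \text{otherwise,}\end{cases}
\]
so that the $i$-th convolution entry is $c_i=\max_{j}M_{i,j}$. The next step is to verify the (concave) Monge inequality $M_{i_1,j_1}+M_{i_2,j_2}\ge M_{i_1,j_2}+M_{i_2,j_1}$ for every $i_1<i_2$ and $j_1<j_2$ at which all four entries are finite. The $a$-terms cancel and the inequality reduces to $b_{u}+b_{v}\ge b_{u'}+b_{v'}$ with $u=i_1-j_1$, $v=i_2-j_2$, $u'=i_1-j_2$, $v'=i_2-j_1$, satisfying $u'<u\le v<v'$ and $u+v=u'+v'$. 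This is immediate from the concavity of $b$: its first differences $b_k-b_{k-1}$ are non-increasing, so telescoping the increments over $[u',u]$ against those over $[v,v']$ yields the claim. The Monge property in turn implies the total monotonicity hypothesis of SMAWK.

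The only subtlety is handling the $-\infty$ entries. The set of indices at which $M$ is finite forms a parallelogram, so the finite entries in each row occupy a contiguous interval (and likewise in each column). This permits the standard treatment of either restricting SMAWK to the valid column interval per row, or replacing $-\infty$ by a finite sentinel small enough that total monotonicity extends across the boundary without affecting any row maximum. Either way, a single invocation of SMAWK returns all $m+n+1$ row maxima in $O(m+n)$ time, yielding the full $(\max,+)$-convolution within the claimed bound. I expect the main obstacle to lie precisely in this boundary handling---the verification that SMAWK's invariants survive the padding---which is bookkeeping rather than conceptual, but is where care is required.
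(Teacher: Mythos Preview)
Your proposal is correct and follows essentially the same route as the paper: define the matrix $M_{i,j}=a_j+b_{i-j}$, verify the (concave) Monge inequality from concavity of $b$, and invoke SMAWK for the row maxima. You are in fact more careful than the paper about the $-\infty$ padding; one small slip is the claimed ordering $u'<u\le v<v'$ (the middle inequality need not hold in general), but since $u+v=u'+v'$ with $\min(u',v')\le\min(u,v)$ and $\max(u',v')\ge\max(u,v)$ the concavity step goes through unchanged.
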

\begin{proof}
Consider the matrix $A$ with
$A_{ij} = a_j + b_{i - j}$, where we suppose that elements of the sequences with out-of-bounds indices have value $-\infty$.
Note now that $(a\oplus b)_i$ is by definition equal to the maximum value of row $i$ of $A$. Therefore computing $a\oplus b$ corresponds to finding the row maxima of $A$.
Now note that for any $(i,j)\in\{0,1,\dots,n-1\}\times\{0,1,\dots,m-1\}$,
we have 
\begin{align*} A_{i,j} - A_{i,j+1} = & a_j + b_{i-j} - a_{j+1} - b_{i-j-1} \\
\overset{concavity}{\geq}& a_j + b_{i+1-j} - a_{j+1} - b_{i-j} \\
 =& A_{i+1,j} - A_{i+1,j+1} \end{align*}
therefore
$A$ is Monge. 
The main result of \cite{SMAWK} is that given a Monge matrix $A\in\mathbb{R}^{n\times m}$, one can compute
all its row maxima in time $O(m + n)$, which implies the Lemma. 
\end{proof}

\begin{lemma}
Given an arbitrary sequence $a_0, \dots, a_m$ and a $k$-step concave sequence $b_0,\dots, b_n$ we can compute the $(\max,+)$ convolution of $a$ and $b$ in time $O(m + n)$.
\label{stepconv}
\end{lemma}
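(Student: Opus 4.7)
The plan is to reduce a $(\max,+)$-convolution against a $k$-step concave sequence to $k$ ordinary $(\max,+)$-convolutions against a plain concave sequence, so that Lemma~\ref{conv} can be invoked $k$ times. To expose the structure of $b$, set $B_\ell := b_{\ell k}$; by definition $B$ is concave and $b_i = B_{\lfloor i/k \rfloor}$ for every $i$ in range. In parallel, split $a$ into $k$ subsequences by residue class modulo $k$: let $A^{(t)}_s := a_{sk+t}$ for $t \in \{0,1,\dots,k-1\}$.

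Next, fix an output index $i = pk+r$ with $0 \le r < k$ and a summand index $j = sk+t$ with $0 \le t < k$. A short case analysis on whether $t \le r$ or $t > r$ shows
\[
b_{i-j} \;=\; \begin{cases} B_{p-s} & \text{if } t \le r,\\ B_{p-s-1} & \text{if } t > r, \end{cases}
\]
and hence
\[
(a \oplus b)_{pk+r} \;=\; \max\!\Bigl(\,\max_{0 \le t \le r}\,(A^{(t)} \oplus B)_p,\ \ \max_{r < t < k}\,(A^{(t)} \oplus B)_{p-1}\,\Bigr).
\]
Since each $A^{(t)}$ has length $O(m/k)$ and $B$ has length $O(n/k)$, Lemma~\ref{conv} computes each $C^{(t)} := A^{(t)} \oplus B$ in time $O(m/k + n/k)$, for a total of $O(m+n)$ across all $k$ residues.

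To then read off every one of the $m+n+1$ outputs in constant amortized time, I would precompute, for each row index $p$, the prefix maxima $P_r(p) := \max_{t \le r} C^{(t)}_p$ and the suffix maxima $Q_r(p) := \max_{t > r} C^{(t)}_p$, swept in $O(k)$ per $p$; with $O((m+n)/k)$ values of $p$ this contributes another $O(m+n)$. The desired entry is then $(a \oplus b)_{pk+r} = \max\!\bigl(P_r(p),\,Q_r(p-1)\bigr)$. The only delicate step is verifying the $p$-versus-$(p-1)$ shift in the case split (in particular at the boundary $r = t$ and when $p = 0$, where the second max is empty); once the indexing is pinned down, the three $O(m+n)$ costs combine to give the claimed linear-time bound.
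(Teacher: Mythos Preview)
Your proof is correct and uses the same core decomposition as the paper: extract the concave subsequence $B$ from $b$, split $a$ into $k$ residue classes, and invoke Lemma~\ref{conv} on each of the $k$ pairs for a total cost of $O(m+n)$. The only difference is in the final aggregation: the paper flattens the $k$ convolution results into a single sequence $f$ (with $f_{pk+t}=C^{(t)}_p$ in your notation) and then computes length-$k$ sliding-window maxima of $f$ via a deque, whereas you compute per-$p$ prefix and suffix maxima over $t$; both realizations cost $O(m+n)$ and encode the identical identity $(a\oplus b)_{pk+r}=\max\bigl(\max_{t\le r}C^{(t)}_p,\ \max_{t>r}C^{(t)}_{p-1}\bigr)$.
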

\begin{proof}
We use the fact that we can compute the $(\max,+)$ convolutions of an arbitrary sequence with a concave sequence in linear time (Lemma~\ref{conv}). Since $b$ is a $k$-step concave sequence, taking
every $k$-th term of it one gets a concave sequence of size $O(n/k)$. Then, we do the same for $a$, taking $k$ subsequences of size $m/k$ each. Therefore we can compute the convolution
between the concave sequence and all of these subsequences of $a$ in linear time. The results of these convolutions can be used to compute the final sequence.
We now describe this in detail.

For ease of notation, we will again assume that our sequences take value $-\infty$ in out-of-bounds indices.
Let $x^{(i)} := (a_{i}, a_{k + i}, a_{2k + i}, \dots)$ denote the subsequence of $a$ with indices whose remainder is $i$ when divided by $k$,
and $y := (b_0, b_k, b_{2k}, \dots)$.
Furthermore, define 
\[f_i = \displaystyle \max_{q=0}^{\infty} \{ b_{qk} + a_{i - qk} \}\]
Now, for any $j$ we have 
\begin{align*}
 \displaystyle\max_{i=j-k+1}^{j} f_i &= \max_{i=j-k+1}^{j} \max_{q=0}^{\infty} \{ b_{qk} + a_{i - qk} \} \\
 &=\max_{i=j-k+1}^{j} \max_{q=0}^{\infty} \{ b_{qk + j - i} + a_{i - qk} \} \\
 &= \max_{z=0}^{\infty} \{ b_z + a_{j - z} \} 
\end{align*}

where the second equality follows from the fact that $b_{qk + t} = b_{qk}$ for any $t\in[k-1]$ and the third from the fact that $z=qk+j-i$ can take any value in $[0,\infty)$.

This is the $j$-th element of the $(\max,+)$-convolution between $a$ and $b$, so the elements of this convolution are exactly the maxima of size-$k$ segments of $f$.

In order to compute $f$, note that for some $p$, the convolution between $x^{(p)}$ and $y$ gives us all values of $f$ of the form $f_{qk+p}$, for any $q$.
This is because from the definition of $f$,
\begin{align*} f_{qk + p} &= \displaystyle \max_{z=0}^{\infty} \{ b_{zk} + a_{qk + p - zk} \} \\
  &= \max_{z=0}^{\infty} \{ y_z + x^{(p)}_{q - z}\} \\
  &= (x^{(p)}\oplus y)_q 
\end{align*}
Furthermore, $y$ is a concave sequence and by Lemma~\ref{conv} we can compute such a convolution in time $O((m + n)/k)$. Doing this for all $0\leq p < k$, we can compute all
values of $f$ in time $O(m + n)$.

Now, in order to compute the target sequence, we have to compute the maxima of all size-$k$ segments of $f$. We can do that using a simple sliding window technique.
Specifically, suppose that for some segment $[i,i+k-1]$ we have an increasing subsequence of $f_{[i,\dots,i+k-1]}$, containing all the potentially useful elements.
The first element of this subsequence is the maximum value of $f$ in the segment $[i,i+k-1]$. Now, to move to $[i+1,i+k]$, we remove $f_i$ if it is in the subsequence,
and then we compare $f_{i+k}$ with the last element in the subsequence. Note that if that last element has value $\leq f_{i+k}$, it will never be the maximum element 
in any segment. Therefore we can remove it and repeat until the last element has value greater than $f_{i+k}$, at which point we just insert $f_{i+k}$ in the end of the subsequence.
Note that by construction, this subsequence will always be decreasing, and the first element will be the maximum of the respective segment.
The total runtime is linear if implemented with a standard queue.
\end{proof}

Now that we have these tools we can use them to prove the main result of this section:

\medskip
\begin{prevproof}{Theorem}{knapsack}
Consider any knapsack instance where $D$ is the number of distinct item weights $w_1^{\#}, \dots, w_D^{\#}$. Now for each $i\in[D]$ let $c_i$ be the number of items with weight $w_i^{\#}$ and
$v_1^{(i)} \geq v_2^{(i)} \dots \geq v_{c_i}^{(i)}$ their respective values. 

If we only consider items with weights $w_i^{\#}$, the knapsack problem is easy to solve, since we will just greedily 
pick the most valuable items until the knapsack fills up. More specifically, if $b_s$ is the maximum value obtainable with a knapsack of size $s$, we have that
$b_0=0, b_{w_i} = v_1^{(i)}, b_{2w_{i}} = v_1^{(i)} + v_2^{(i)}, \dots$, and also $b_j = b_{j-1}$ for any $j$ not divisible by $w_i^{\#}$.
Therefore $b$ is a $w_i^{\#}$-step concave sequence.

In order to compute the full solution, we have to compute the $(\max,+)$ convolution of $D$ such sequences. Since by Lemma~\ref{stepconv} each convolution takes linear time
and we only care about the first $T$ values of the resulting sequence (i.e. we will only ever keep the first $T$ values of the result of a convolution),
the total runtime is $O(TD)$, where $T$ is the size of the knapsack.
\end{prevproof}

\begin{corollary}
\label{tmcorol}
Knapsack can be solved in $O(T M)$ time.
\end{corollary}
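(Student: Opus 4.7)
The plan is to derive this corollary as a direct consequence of Theorem~\ref{knapsack}. The only observation needed is a relationship between the number of distinct weights $D$ and the weight bound $M$: since every weight $w_i$ is a positive integer in $[M]$, the set of values that can appear as a weight is a subset of $\{1, 2, \dots, M\}$, and hence the number $D$ of distinct weights satisfies $D \leq M$.

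Given this, I would simply invoke Algorithm~\ref{algo1}: by Theorem~\ref{knapsack}, it solves Knapsack in time $O(T D)$. Substituting $D \leq M$ yields a running time of $O(T M)$, which is the desired bound. No change to the algorithm is required; the corollary is purely a reparameterization of the previous theorem in terms of the weight bound instead of the number of distinct weights.

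There is no real obstacle here, since the bookkeeping of items into the $D$ distinct-weight buckets $S_1, \dots, S_D$ in the preprocessing step of Algorithm~\ref{algo1} naturally uses at most $M$ buckets when weights lie in $[M]$, so every step of the original analysis (the per-bucket greedy computation of the $w^{\#}_i$-step concave sequence $b^{(i)}$, and the successive $(\max,+)$-convolutions via Lemma~\ref{stepconv} truncated at $T$) carries over verbatim. The only minor thing to note is that the preprocessing partition and the construction of each $b^{(i)}$ can be carried out in $O(N + T M)$ total time, which is dominated by the convolution cost $O(T M)$ assuming $N \leq T M$ (otherwise one can trivially discard weight-classes whose items cannot all fit). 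Hence the overall running time remains $O(T M)$, completing the corollary.
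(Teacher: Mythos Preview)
Your proposal is correct and matches the paper's approach: the paper does not even supply a proof for this corollary, treating it as immediate from Theorem~\ref{knapsack} together with the trivial bound $D \le M$ (weights lie in $[M]$). Your additional remarks about preprocessing time are accurate but more careful than the paper itself, which silently absorbs the $O(N)$ reading/sorting cost into the $O(TD)$ bound.
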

\begin{corollary}
\label{vcorol}
Knapsack can be solved in time $O(nM^2)$ or $O(nV^2)$.
\end{corollary}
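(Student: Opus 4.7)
The plan is to derive both bounds from Theorem~\ref{knapsack}: the $O(nM^2)$ bound by direct substitution, and the $O(nV^2)$ bound by dualizing the algorithm to the min-plus setting.

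For $O(nM^2)$, I would first argue that we may assume $T \le nM$: otherwise $\sum_i w_i \le nM \le T$ and the optimum just takes every item. With this reduction and the trivial bound $D \le M$ (since all weights lie in $[M]$), Theorem~\ref{knapsack} immediately yields runtime $O(TD) \le O(nM \cdot M) = O(nM^2)$. I do not anticipate any real obstacle here.

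For $O(nV^2)$, my plan is to compute the ``inverse'' of the Knapsack value function. Define, for $s \in \{0, 1, \dots, nV\}$, the quantity $g(s)$ to be the minimum total weight of a subset with total value at least $s$; then the answer to Knapsack is the largest $s$ with $g(s) \le T$. Mirroring Algorithm~\ref{algo1}, I would partition items into at most $V$ groups according to their value, and for the group of items of common value $v$ sorted by \emph{increasing} weight $w_1 \le w_2 \le \cdots$, the within-group min-weight sequence $g^{(v)}(s) = w_1 + \cdots + w_{\lceil s/v \rceil}$ (with $g^{(v)}(0) = 0$) is the min-plus analogue of a $v$-step concave sequence: its values at the distinguished indices form a convex sequence, with constant plateaus in between. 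Combining the groups then reduces to $V$ min-plus convolutions, each truncated after index $nV$.

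The one real technical step is a min-plus analogue of Lemma~\ref{stepconv}. Because $g^{(v)}$ is convex along its step indices, the matrix $A_{i,j} = a_j + g^{(v)}(i - j)$ is \emph{anti-Monge}, and SMAWK computes row minima of an anti-Monge matrix in linear time by exactly the argument used for row maxima of a Monge matrix; the residue-class decomposition and sliding-window step in the proof of Lemma~\ref{stepconv} transfer verbatim, with sliding-window minima replacing maxima. The only (mild) bookkeeping obstacle I anticipate is an off-by-one shift: in the step-convex case the jumps occur at indices $kv + 1$ rather than $kv$, which is absorbed into the offsets of the residue-class decomposition modulo $v$. Once this is handled, $V$ convolutions on sequences of total size $O(nV)$ give total runtime $O(nV \cdot V) = O(nV^2)$.
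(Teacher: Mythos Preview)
Your proposal is correct and follows essentially the same approach as the paper. The paper's proof is terse: it derives $O(nM^2)$ from Corollary~\ref{tmcorol} together with $T\le nM$, and for $O(nV^2)$ it simply says to swap the roles of weights and values in Algorithm~\ref{algo1}, replace $(\max,+)$ by $(\min,+)$, set the capacity to $nV$, and read off the answer as the largest value whose minimum achievable weight is at most $T$. Your write-up expands exactly this dualization, and your anticipated off-by-one in the step-convex structure is real but, as you note, harmless for the residue-class decomposition and sliding-window argument.
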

\begin{proof}
The first bound directly follows by Corollary~\ref{tmcorol} and the fact that $T \leq nM$.
For the second bound, note that by swapping the role of the weights and the values in Algorithm~\ref{algo1}, replacing all $(\max,+)$-convolutions by $(\min,+)$-convolutions,
and setting the knapsack capacity to $nV$ as opposed to $T$, this algorithm runs in time $O(nV^2)$
and outputs for every possible value, the minimum weight of items that can achieve this value. The answer can then be recovered by finding the minimum 
value that gives a corresponding weight of at most $T$.
\end{proof}

\subsection{Unbounded Knapsack}
\label{s:Unknap}
Given $N$ items with weights $w_1, \dots, w_N$ and values $v_1, \dots, v_N$, and a parameter $T$, 
our goal is to find a multiset of items $S\subseteq [N]$ of total weight at most $T$
(i.e. $\sum\limits_{i\in S} w_i \leq T$) that maximizes the total value $\sum\limits_{i\in S} v_i$.
We will denote the largest item weight by $M$.

Note that this problem is identical to Knapsack except for the fact that there is no
limit on the number of times each item can be picked. This means that
we can assume that there are no two items with the same weight, since we would only ever pick the most valuable of the two.

\begin{algorithm}[ht]
\caption{Unbounded Knapsack}
\begin{algorithmic}[1]
\STATE Let $v^{(0)}$ be a sequence where $v^{(0)}_x$ is the value of the element with weight $x$ or $-\infty$ if no such element exists
\FOR{$z=1,\dots,\lceil\log M\rceil$}
\STATE $v^{(z)} \leftarrow v^{(z-1)}\oplus v^{(z-1)}$
\ENDFOR
\STATE $a_{\left[0,M\right]} \leftarrow v^{(\lceil\log M\rceil)}$
\FOR{$i=\lceil\log\frac{T}{M}\rceil,\dots,1$}
\STATE $a_{\left[\frac{T}{2^i} - M, \frac{T}{2^i} + M\right]} \leftarrow a_{\left[\frac{T}{2^i} - M, \frac{T}{2^i}\right]} \oplus a_{\left[0,M\right]}$
\STATE $a_{\left[\frac{T}{2^{i-1}}-M, \frac{T}{2^{i-1}} \right]} \leftarrow a_{\left[\frac{T}{2^{i}} - M, \frac{T}{2^i} + M\right]} \oplus a_{\left[\frac{T}{2^{i}} - M, \frac{T}{2^i} + M\right]}$
\ENDFOR
\STATE Output $a_T$
\end{algorithmic}
\label{algo2}
\end{algorithm}

The following is the main theorem of this section:
\begin{theorem}
Algorithm~\ref{algo2} solves \emph{Unbounded knapsack} in time $O(M^2 \log T)$.
\label{unbounded}
\end{theorem}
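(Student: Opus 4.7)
The plan rests on establishing a \emph{balanced split} property of the optimal-value function
\[
a^*_t := \max\left\{\sum_{i \in S} v_i : S \text{ is a multiset with } \sum_{i \in S} w_i = t\right\},
\]
using $-\infty$ when no such multiset exists. First I would show $a^* \oplus a^* = a^*$ (taking $s = 0$ gives $a^*_t$, while any split recombines into a weight-exactly-$t$ multiset), and then prove the key lemma: for any $t$ with $a^*_t > -\infty$ and any target $T' \in [0, t]$, there exists $s \in [\max(0, T' - M + 1), T']$ with $a^*_t = a^*_s + a^*_{t-s}$. The proof is a short prefix-sum argument: list the items of an optimal weight-$t$ multiset in any order, let $0 = s_0 < s_1 < \dots < s_k = t$ be the prefix weights, and take $j^* = \max\{j : s_j \leq T'\}$; since consecutive prefix sums differ by at most $M$, one gets $s_{j^*} \geq T' - M + 1$, and splitting the item list at position $j^*$ yields two sub-multisets whose values sum to $a^*_t$.

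Next I would handle the initialization loop. An easy induction on $z$ gives that $v^{(z)}_t$ equals the best value of a multiset of size at most $2^z$ having weight exactly $t$. Because every multiset of weight at most $M$ has at most $M$ items (each weight is a positive integer), $v^{(\lceil \log M \rceil)}_t = a^*_t$ for $t \in [0, M]$, so the array $a_{[0, M]}$ stored by line~5 coincides with $a^*_{[0, M]}$.

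The main loop is handled by reverse induction on $i$, with invariant that after line~8 the window $a_{[T/2^{i-1} - M, T/2^{i-1}]}$ equals $a^*$ there. For line~7, assuming inductively $a_{[T/2^i - M, T/2^i]} = a^*_{[T/2^i - M, T/2^i]}$, I would invoke the balanced split lemma for each $t$ in the extended window with a target chosen so that one half falls in $[T/2^i - M, T/2^i]$ and the other in $[0, M]$, matching the ranges that the restricted convolution sums over. For line~8, I would apply the lemma again with $T' = T/2^i$: for each $t \in [T/2^{i-1} - M, T/2^{i-1}]$ this produces a split with both halves in $[T/2^i - M, T/2^i + M]$, certifying that the self-convolution coincides with $a^*_t$.

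Finally, each $(\max,+)$-convolution in the algorithm is between two sequences of length $O(M)$ and can be computed in $O(M^2)$ time by the naive method. The initialization performs $\lceil \log M \rceil$ such operations and the main loop performs two per iteration over $\lceil \log(T/M) \rceil$ iterations, totaling $O(M^2 \log T)$. I expect the main technical hurdle to be verifying that the balanced split lemma's output range $[T' - M + 1, T']$ lines up cleanly with the restricted convolution ranges at every $t$ in the output windows of line~7 and line~8---particularly at the awkward boundary $t \approx T/2^i + M$ in line~7, where the admissible split range narrows and one must choose the target $T'$ carefully (or, if needed, only use the correctness of the window at values of $t$ actually consumed in subsequent iterations and ultimately in computing $a_T$).
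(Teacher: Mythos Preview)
Your proposal is correct and follows essentially the same approach as the paper: the core idea in both is that, because every item weight is at most $M$, any optimal multiset can be split into two parts whose weights land in prescribed length-$M$ windows. The paper phrases this as ``repeatedly move an item from the heavier part to the lighter one until $|W_1 - W_2| < M$'' and only states the midpoint version explicitly (which justifies line~8), leaving the justification of line~7 implicit. Your balanced-split lemma with the prefix-sum argument is a cleaner and more general formulation that handles both line~7 and line~8 uniformly, and your concern about boundary alignment at $t \approx T/2^i + M$ is exactly the detail the paper glosses over; your suggested fix (only guarantee correctness on the portion of the window actually consumed downstream) is the right attitude and works.
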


\newpar{Overview} As in the algorithm for Knapsack our algorithm utilizes $(\max,+)$-convolutions, but with a different strategy. 
We aren't using any concavity arguments here, but in fact we will use the straightforward quadratic-time algorithm for computing $(\max,+)$-convolutions.
The main argument here is that if all the weights are relatively small, one can always partition any solution in two, so that the weights of the two parts 
are relatively close to each other. Therefore, for any knapsack size we only have to compute the optimal values for a few knapsack sizes around its half, and not for
all possible knapsack sizes.

We can now proceed to the proof of this result.

\medskip
\begin{prevproof}{Theorem}{unbounded}
Consider any valid solution to the unbounded knapsack instance. Since every item has weight at most $M$, we can partition the items of that solution into two multisets with
respective weights $W_1$ and $W_2$, so that $|W_1 - W_2| < M$ (one can obtain this by repeatedly moving any item from the larger part to the smaller one).
This implies the following, which is the main fact used in our algorithm: If $a_s$ is the maximum value obtainable with a knapsack of size $s$, then we have that 
\[ a_s = \left[(a_{s/2 - M/2}, \dots, a_{s/2 + M/2})^{\oplus 2}\right]_s \]
where ${}^{\oplus 2}$ denotes $(\max,+)$-convolution squaring, i.e. applying $(\max,+)$-convolution between a sequence and itself.

First, we compute the values $a_1, \dots, a_M$ in $O(M^2\log M)$ time as follows:
We start with the sequence $v^{(0)}$, where $v_x^{(0)}$ is the value of the element with weight $x$, or $-\infty$ if such an item does not exist.
Now define $v^{(i+1)} = (v^{(i)} \oplus v^{(i)})_{[0,M]}$. This convolution can be applied in time $O(M^2)$ for any $i$, since we are always only keeping the first $M$ entries.
By induction, it is immediate that $v^{(i)}$ contains the optimal values achievable for all 
knapsack sizes in $[M]$ using at most $2^i$ items. Therefore $a_{0,\dots,M} \equiv v_{0,\dots,M}^{(\lceil\log M\rceil)}$,
which as we argued can be computed in time $O(M^2 \log M)$.

Now, suppose that we have computed the values $a_{\frac{T}{2^i} - M}, \dots, a_{\frac{T}{2^i}}$ for some $i$.
By convolving this sequence with $a_0, \dots, a_M$ we can compute in time $O(M^2)$ the values $a_{\frac{T}{2^i}+1}, \dots, a_{\frac{T}{2^i}+M}$.
Now, convolving the sequence
$a_{\frac{T}{2^i}-M}, \dots, a_{\frac{T}{2^i}+M}$
with itself gives us $a_{\frac{T}{2^{i-1}} - M}, \dots, a_{\frac{T}{2^{i-1}}}$ 
(here we used the fact that to compute $a_{2j}$ we only need the values $a_{j - M/2},\dots,a_{j + M/2}$). 
Doing this for $i=\lceil\log T\rceil, \dots, 2, 1$, we are able to compute the values $a_{T-M}, \dots, a_T$ in total time $O(M^2 \log T)$. 
The answer to the problem, i.e. the maximum value achievable, is $\max\{a_{T-M}, \dots, a_T\}$.
\end{prevproof}

Recent work of \cite{EW18} shows, using the \emph{Steinitz lemma}, that an optimal Knapsack solution for a capacity in $[T-M,T]$ can be turned into an optimal solution for capacity $T$
by inserting or removing at most $M$ elements, where $M$ is a bound on weight of the items.
In the case of Unbounded Knapsack, a solution that only uses the best item until it exceeds capacity $T-M^2$ can always be extended into an optimal solution with capacity $T$.
Therefore the capacity can be assumed to be $O(M^2)$. Combining this with the $\frac{M^2}{2^{\sqrt{\log M}}}$-time algorithm of \cite{Williams14} for $(max,+)$-convolution implies an
algorithm that runs in time $O\left(\frac{M^2}{2^{\sqrt{\log M}}}\right)$.
\begin{corollary}
\label{munbound}
Unbounded Knapsack can be solved in time $O(M^2)$
\end{corollary}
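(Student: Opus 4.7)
The plan is to combine the two observations sketched in the paragraph preceding the corollary to eliminate the $\log T$ factor of Theorem~\ref{unbounded}. First I would invoke a structural reduction that lets us assume $T = O(M^2)$ without loss of generality. Then I would rerun Algorithm~\ref{algo2} under this assumption, but replacing each naive $(\max,+)$-convolution with the $O(n^2 / 2^{\sqrt{\log n}})$ convolution of Williams \cite{Williams14}.

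For the reduction step, let $b^\star$ denote an item maximizing the ratio $v_i / w_i$. Following \cite{EW18,JR18}, one commits up front to $k$ copies of $b^\star$, where $k$ is the smallest nonnegative integer with $k \cdot w_{b^\star} > T - M^2$; the residual instance then has capacity in $[0, M^2]$ and is itself an unbounded-knapsack instance. The commitment preserves optimality by a Steinitz-type exchange: if some optimal solution used fewer than $k$ copies of $b^\star$, then the non-$b^\star$ items in it would account for more than $M^2$ units of weight, and a Steinitz rearrangement of those items would expose a prefix of total weight at least $w_{b^\star}$ whose average value-per-weight is bounded above by that of $b^\star$. Replacing such a prefix by an additional copy of $b^\star$ does not decrease the value, and iterating drives the solution into the prescribed form.

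With $T$ now $O(M^2)$, Algorithm~\ref{algo2} performs $O(\log T) = O(\log M)$ iterations, each consisting of a constant number of $(\max,+)$-convolutions on sequences of length $O(M)$. Using Williams' convolution, each such operation costs $O(M^2 / 2^{\sqrt{\log M}})$, so the total running time is
\[
O\!\left( \frac{M^2}{2^{\sqrt{\log M}}} \cdot \log M \right) = O(M^2),
\]
since $\log M = o(2^{\sqrt{\log M}})$. The main obstacle is executing the Steinitz-style exchange argument cleanly; fortunately, the unbounded regime is forgiving, since one can always add further copies of $b^\star$ to absorb slack, so the heavy lifting is really to invoke the structural bound of \cite{EW18,JR18} as a black box and then observe that the two speedups compose to shave both the $\log T$ factor and the residual sub-logarithmic factor down to $O(M^2)$.
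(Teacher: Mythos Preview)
Your proposal is correct and follows essentially the same route as the paper: reduce the capacity to $O(M^2)$ via the Steinitz-based structural bound of \cite{EW18} (committing to many copies of the best-ratio item), then run Algorithm~\ref{algo2} with Williams' $(\max,+)$-convolution so that the $O(\log M)$ iterations each cost $O(M^2/2^{\sqrt{\log M}})$, which is $o(M^2)$ in total. The paper in fact states the slightly stronger bound $O(M^2/2^{\sqrt{\log M}})$, but the argument is identical to yours.
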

A similar argument can be used to get a more efficient algorithm when we have a bound on the values of the items. In particular, using the item $j$ with the highest value-to-weight ratio
$v_j / w_j$, $k = \lfloor \frac{T}{w_j} \rfloor + 1$ times, until we exceed the capacity we get both a lower bound of $(k-1) v_j$ and a upper bound of $k v_j$ on the value of the optimal
solution. In addition, again by the Steinitz Lemma, there exists an optimal solution that uses item $j$ at least $k - V$ times. This allows us to start from value $(k - V) v_j$ and
compute the minimum weight required to achieve values in $[(k-1)v_j, kv_j]$. This gives an algorithm that runs in $O(V^2 \log V)$ using the naive algorithm for $(\min,+)$-convolutions,
and $O\left(\frac{V^2}{2^{\sqrt{\log V}}}\right)$ using the improved algorithm by $\cite{Williams14}$.
\begin{corollary}
\label{vunbound}
Unbounded Knapsack can be solved in time $O(V^2)$
\end{corollary}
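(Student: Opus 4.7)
The strategy is to mirror Theorem~\ref{unbounded}'s $O(M^2)$ argument but work in value space rather than weight space. First I would identify the item $j$ with the largest value-to-weight ratio and set $k=\lfloor T/w_{j}\rfloor+1$. Taking $k-1$ copies of item $j$ fits in capacity $T$ and attains value $(k-1)v_{j}$; conversely, since every item's ratio is at most $v_{j}/w_{j}$, the LP relaxation gives $\mathrm{OPT}\le T\cdot v_{j}/w_{j}\le kv_{j}$. Hence $\mathrm{OPT}\in[(k-1)v_{j},\,kv_{j}]$, a window of integer width at most $v_{j}\le V$.

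Next, I would invoke the Steinitz Lemma, as in \cite{EW18}, to argue that some optimal solution uses item $j$ at least $k-V$ times. Peeling off these guaranteed copies, it suffices to compute, for each residual target value $t$ in the window $[(V-1)v_{j},\,Vv_{j}]$, the minimum total weight $W(t)$ attainable by a multiset of items with total value exactly $t$. The final answer is then $(k-V)v_{j}+t^{\star}$, where $t^{\star}$ is the largest $t$ in the window with $W(t)\le T-(k-V)w_{j}$.

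To compute these $W(t)$ values efficiently, I would run the analogue of Algorithm~\ref{algo2} with the roles of value and weight swapped and $(\max,+)$ replaced by $(\min,+)$. I would start by computing $W(t)$ for $t\in[0,V]$ via repeated $(\min,+)$-squaring of a length-$V$ sequence (which converges in $\log V$ squarings since reaching any value $\le V$ needs at most $V$ items), then iteratively double the value-index of the working window using a pair of $(\min,+)$-convolutions of length-$O(V)$ sequences per step. Reaching the target window around $Vv_{j}$ takes $O(\log V)$ doubling steps, for a total cost of $O(V^{2}\log V)$ using the naive convolution; applying Williams's $(\min,+)$-convolution algorithm \cite{Williams14} reduces each convolution to $O\!\left(V^{2}/2^{\sqrt{\log V}}\right)$, giving an overall bound of $O\!\left(V^{2}\log V / 2^{\sqrt{\log V}}\right)=O(V^{2})$.

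The main obstacle I anticipate is the Steinitz-type structural claim that among optimal solutions at least one uses the max-ratio item at least $k-V$ times. This is the value-space analogue of the ``capacity can be taken $O(M^{2})$'' statement used for Corollary~\ref{munbound}, and it is the crucial fact that confines the search to a value window of width $O(V)$ instead of $\Theta(T)$. Once this structural bound is established, the rest of the argument is a routine adaptation of the convolution-doubling scheme from Theorem~\ref{unbounded}.
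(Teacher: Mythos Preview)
Your proposal is correct and matches the paper's own argument essentially line for line: the paper also selects the maximum-ratio item $j$, sets $k=\lfloor T/w_j\rfloor+1$ to sandwich $\mathrm{OPT}\in[(k-1)v_j,\,kv_j]$, invokes the Steinitz Lemma to assert that some optimal solution uses item $j$ at least $k-V$ times, and then runs the value-space $(\min,+)$ analogue of Algorithm~\ref{algo2} on a window of width $O(V)$ to obtain $O(V^2\log V)$ naively and $O(V^2)$ via \cite{Williams14}. The only cosmetic difference is that you phrase the final computation on the residual window $[(V-1)v_j,\,Vv_j]$ after peeling, whereas the paper phrases it on $[(k-1)v_j,\,kv_j]$ ``starting from value $(k-V)v_j$''; these are the same computation.
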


\section{Capacitated Dynamic Programming}
We now move on to study more general capacitated dynamic programming settings, described by computing the maximum reward path of length $k$ in a directed acyclic graph.
This setting can capture a lot of natural problems, either directly or indirectly. In the following theorem, we show that the Knapsack problem is a special case of this model and
thus a better understanding of Knapsack can lead to improved algorithms for other capacitated problems.

\begin{lemma}[Knapsack]
The Knapsack problem can be modeled as finding a maximum reward path with at most $k$ edges in a node-weighted transitive DAG.
\end{lemma}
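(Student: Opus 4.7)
The plan is to construct, from an instance of Knapsack with $N$ items of weights $w_1,\ldots,w_N$, values $v_1,\ldots,v_N$, and capacity $T$, a transitive node-weighted DAG whose maximum-reward $s$--$e$ path on at most $k := N+1$ edges coincides with the Knapsack optimum. The vertex set of the base DAG $G$ consists of a source $s$, a sink $e$, and one vertex $(i,t)$ for every item $i \in [N]$ and every $t \in \{w_i, w_i+1, \ldots, T\}$; the weight of $(i,t)$ is $v_i$, and $s,e$ have weight $0$. The edges of $G$ are $s \to (i,w_i)$ for every $i$, the edges $(i,t) \to (i', t+w_{i'})$ for every $i < i'$ with $t + w_{i'} \le T$, the edges $(i,t) \to e$, and $s \to e$. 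The DAG I work with is the transitive closure $\bar G$ of $G$, which is transitive by construction and still acyclic (since $i$ strictly increases along every base edge).

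For the upper bound, I would observe that every base edge strictly increases the $i$-coordinate, so any $s$--$e$ path in $\bar G$ visits vertices $s, (i_1,t_1), \ldots, (i_p, t_p), e$ with $i_1 < \cdots < i_p$ and $p \le k-1 = N$. An induction on the length of a witnessing path in $G$ shows that every transitive-closure edge $(i,t)\to(i',t')$ satisfies $t' - t \ge w_{i'}$, because each step of the underlying $G$-path that reaches some item $j$ increases $t$ by exactly $w_j$. Telescoping gives $\sum_{j=1}^p w_{i_j} \le t_p \le T$, so $\{i_1, \ldots, i_p\}$ is a feasible Knapsack solution and the reward $\sum_j v_{i_j}$ is at most the optimum $\mathrm{OPT}$.

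For the matching lower bound, I would take an optimal selection $\{i_1 < \cdots < i_p\}$ with $\sum_j w_{i_j} \le T$ and exhibit the explicit chain $s \to (i_1, w_{i_1}) \to (i_2, w_{i_1}+w_{i_2}) \to \cdots \to (i_p, \sum_j w_{i_j}) \to e$. Every edge used is already present in $G$ (the intermediate targets have $t$-coordinate bounded by $T$), the chain uses $p+1 \le N+1$ edges, and its node-weight sum is exactly $\sum_j v_{i_j} = \mathrm{OPT}$. Combining the two directions, the maximum-reward $s$--$e$ path in $\bar G$ on at most $N+1$ edges equals $\mathrm{OPT}$, which is the claimed model.

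The only step that is not entirely mechanical is the inductive claim that a transitive-closure edge advances the $t$-coordinate by at least the weight of its target item; this is the main obstacle, and it is handled by unfolding such an edge into a base-DAG path whose every step adds exactly one item's weight to $t$. Everything else is routine verification of feasibility and value.
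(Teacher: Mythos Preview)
Your construction is correct, but it takes a quite different route from the paper's, and the difference is worth noting.

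The paper builds, for each item $i$, a two-branch gadget: a short path $N_i$ of length $2$ (``skip item $i$'') and a long path $Y_i$ of length $w_i+2$ (``take item $i$''), and chains all gadgets in series. The edge budget is $k=n+T$, so the capacity constraint $\sum_{i\in S} w_i\le T$ is encoded \emph{by the length bound itself}. Because the transitive closure introduces shortcut edges inside $Y_i$, a large bonus $b=nw_{\max}v_{\max}$ is placed on $N_i$ and spread over $Y_i$ to make any partial use of $Y_i$ strictly worse than using $N_i$; a short calculation then forces every optimal path to traverse each gadget fully.

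Your construction instead writes down the DP table explicitly: a vertex $(i,t)$ records ``item $i$ is the last one taken and the running weight is $t$'', with edges that add exactly $w_{i'}$ to $t$. Your key step---that a transitive-closure edge $(i,t)\to(i',t')$ satisfies $t'-t\ge w_{i'}$---is correct and follows from unfolding the witnessing base-DAG path, so the telescoping bound $\sum_j w_{i_j}\le t_p\le T$ goes through. The capacity $T$ is thus enforced by the vertex set, and the length bound $k=N+1$ is vacuous: every $s$--$e$ path already has at most $N+1$ edges since the $i$-coordinate strictly increases.

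Both reductions are valid for the lemma as stated. Your argument is simpler and avoids the bonus trick entirely, but it does not use the edge budget to carry the capacity, which is somewhat against the spirit of the surrounding section on \emph{capacitated} dynamic programming. The paper's construction, by contrast, makes the length constraint do the work, at the cost of the extra analysis ruling out fractional paths.
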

\begin{proof}
Let the item weights and values be $w_1, \dots, w_n$ and $v_1, \dots, v_n$ respectively. We will create a DAG for each item and then join all these DAGs in series.
Specifically, for item $i$, its DAG $G_i$ will consist of two parallel paths $Y_i$ and $N_i$ between a pair of nodes $s_i$ and $t_i$.
$Y_i$ will correspond to taking item $i$, and $N_i$ to not taking it. 
\begin{itemize}
\item{$N_i$ will be a path of length $2$ from $s_i$ to $t_i$, where the intermediate vertex has reward $b = nw_{max}v_{max}$.}
\item{$Y_i$ will be a path of length $w_i+2$ from $s_i$ to $t_i$, all of which vertices other than $s_i, t_i$ have reward $\frac{b + v_i}{w_i+1}$.}
\end{itemize}
Finally, we just join all $G_i$ in series, i.e. for all $i\in[n-1]$, identify $t_i$ with $s_{i+1}$, and we ask for the maximum reward path with at most $n + T$ edges from $s_1$ to $t_n$.

Note that we couldn't have just set the reward of path $N_i$ to $0$, because that would potentially allow one to pick a path that is a subset of $Y_i$, which corresponds to picking a fraction
of an item and is invalid.
However, note that with our current construction any optimal path will either use the whole path $Y_i$, or it will not use it at all.
This is because 
the reward of picking $1$ vertex from $N_i$ is $b$, while the reward of picking
at most $w_i$ vertices from $Y_i$
is at most $b + v_i - \frac{b + v_i}{w_i+1} = b - \frac{b - w_i v_i}{w_i+1} = b - \frac{nw_{max}v_{max} - w_i v_i}{w_i+1} < b$. 

Now, for each $i$, the optimal path will definitely contain either $Y_i$ or $N_i$. If this were not the case, the reward of the solution would be at most
\begin{align*}
\sum\limits_i \left(b + v_i\right) - \underset{i}{\min} \{b + v_i\} < nb + \sum\limits_i v_i - b \leq nb + nv_{max} - nw_{max}v_{max} < nb
\end{align*}
while
by simply picking all $N_i$'s one gets reward $nb$.

Therefore we have shown that the total reward of the optimal solution will be 
\[ \sum\limits_{i\in S} (b + v_i) + b (n - \left|S\right|) = nb + \sum\limits_{i\in S} v_i \]
for some set $S$ such that 
$n + T\geq \sum\limits_{i\in S} (w_i+1) + (n - |S|) = n + \sum\limits_{i\in S} w_i$, or equivalently $\sum\limits_{i\in S} w_i \leq T$.
Therefore this set $S$ is the optimal set of items to be picked in the knapsack.

\end{proof}

	\subsection{Node-weighted Graphs} \label{s:node}
In this section, we study the problem of finding maximum-reward paths in node-weighted transitive DAGs. 
In Lemma~\ref{transitive}, we show that in general this problem is hard, by reducing $(\max,+)$-convolution to it.
We then proceed to show our second main result, which provides a family of graphs for which the problem can be efficiently solved.

%It is important to note that 

\begin{lemma}[$(\max,+)$-hardness of Node-weighted graphs]
Given a transitive DAG, a pair of vertices $s$ and $t$, and an integer $k$, the problem of computing a maximum reward path from $s$ to $t$ with at most $k$ edges
is $(\max,+)$-convolution hard, i.e. requires $\Omega((m k)^{1-o(1)})$ time assuming $(\max,+)$-convolution hardness.
\label{transitive}
\end{lemma}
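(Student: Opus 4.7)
Plan: I would reduce the equivalent three-sequence formulation of $(\max,+)$-convolution---given $a_0,\dots,a_n$, $b_0,\dots,b_n$, $c_0,\dots,c_n$, compute $\max_{i+j+k=n}\{a_i+b_j+c_k\}$---to the transitive DAG max-reward path problem. The strategy is to design a partial order whose transitive closure encodes the three-way constraint $i+j+k=n$, so that a maximum-reward path between a fixed source and sink reveals the convolution answer.

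First I would define a five-layer partial order on $O(n^2)$ nodes: source $s$, first layer $\{\alpha_i : i \in [0,n]\}$, auxiliary middle layer $\{\beta_{i,j} : i+j \le n\}$ (indexed by \emph{pairs} so that each node ``remembers'' the first two choices), third layer $\{\gamma_k : k \in [0,n]\}$, and sink $t$. I declare $s \prec \alpha_i \prec \beta_{i,j} \prec \gamma_{n-i-j} \prec t$ for every valid $(i,j)$, and take the DAG to be the transitive closure. Node weights are $\alpha_i \mapsto a_i + L$, $\beta_{i,j} \mapsto b_j + L$, $\gamma_k \mapsto c_k + L$ for a large offset $L$, and $s,t \mapsto 0$.

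Next I would set the path-length budget to $k_0 = 4$ and verify that the optimal $s$-to-$t$ path is forced into the canonical form $s \to \alpha_i \to \beta_{i,j} \to \gamma_{n-i-j} \to t$: choosing $L$ larger than any realizable shortfall ensures that shorter paths (length $\le 3$) are dominated by a full-length path, while the partial order requires the $\beta$-node to be indexed by the same $i$ used at the $\alpha$-step and to connect to precisely $\gamma_{n-i-j}$. The optimal reward is therefore $3L + \max_{i+j+k=n}\{a_i+b_j+c_k\}$, so solving the DAG problem returns the three-sequence convolution answer modulo the known offset $3L$.

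Finally I would count parameters: a case analysis across cross-layer comparabilities shows that the transitive closure has $m = \Theta(n^2)$ edges (the dominant contribution coming from the $\Theta(n^2)$ auxiliary $\beta$-nodes), so $m k_0 = \Theta(n^2)$. Any algorithm solving the DAG problem in $o((m k_0)^{1-o(1)}) = o(n^{2-o(1)})$ time would, composed with the reduction, yield a subquadratic algorithm for three-sequence $(\max,+)$-convolution, contradicting the hardness hypothesis.

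The main obstacle is that the constraint $i+j+k = n$ is three-way and cannot be expressed using only bipartite predicates between adjacent layers in a transitive graph; the fix is to enlarge the middle layer so each node carries pair information, at the cost of $\Theta(n^2)$ extra nodes. The other subtle point is bounding the number of comparable pairs in the resulting partial order: transitivity could a priori blow the closure up to $\Theta(n^3)$ or more, and it requires careful bookkeeping (each $\alpha_i$ is comparable to only $O(n-i)$ of the $\beta$'s and $\gamma$'s, summing to $\Theta(n^2)$) to confirm the $\Theta(n^2)$ bound.
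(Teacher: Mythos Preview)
Your reduction has a genuine gap: it is not efficient enough to yield a conditional lower bound. Your middle layer $\{\beta_{i,j}\}$ already contains $\Theta(n^2)$ nodes (and the transitive DAG has $m=\Theta(n^2)$ edges), so merely \emph{writing down} the instance takes $\Theta(n^2)$ time. Composing an assumed $O((mk_0)^{1-\epsilon})=O(n^{2-2\epsilon})$ algorithm for the DAG problem with your reduction therefore yields a $\Theta(n^2)+O(n^{2-2\epsilon})=\Theta(n^2)$ algorithm for the three-sequence problem, which does not contradict $(\max,+)$-convolution hardness. Put differently, since $k_0=O(1)$ your ``lower bound'' $\Omega((mk_0)^{1-o(1)})=\Omega(m^{1-o(1)})$ is already matched by the trivial input-reading cost and does not use the hypothesis at all.

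The paper avoids this by keeping the graph small and pushing all the work into the \emph{path length}. For each sequence it builds only $O(n)$ nodes: a chain $a_0\to a_1\to\cdots\to a_n$ together with ``exit'' nodes $a_i'$, with weights tuned so that any optimal path must take the form $a_0,\ldots,a_{k_1},a_{k_1}'$. Concatenating three such gadgets in series, the constraint $k_1+k_2+k_3=n$ is encoded by the total number of edges ($n+5$), not by pair-indexed intermediate nodes. Thus the underlying graph has $O(n)$ nodes and (in succinct form) $O(n)$ edges, the path-length parameter is $\Theta(n)$, and an $O((mk)^{1-\epsilon})$ algorithm would give an $O(n^{2-2\epsilon})$ convolution algorithm after an $O(n)$-time reduction. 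The key idea you are missing is precisely this: use a long path to encode the index, rather than a wide layer.
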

\begin{proof}
Given a sequence $x_0, \dots, x_k$, we construct the following node-weighted DAG, on nodes $a_0, \dots, a_k, a_0', \dots, a_k'$.
For all $i\in[k]$, we add edge $(a_{i-1}, a_i)$ and
for all $0\leq i \leq k$, we add edge $(a_i, a_i')$.
Let $M = 3 \max\{|x_0|, \dots, |x_k|\}$ and $T = 10 M k$.
If we denote the value of node $z$ as $val(z)$, we define $val(a_i)=M$ and $val(a_i') = T + x_i - M i$ for all $i$.

Now consider three such DAGs, one for each subsequence $x_0, \dots, x_k$, $y_0, \dots, y_k$, and $z_0, \dots, z_k$, with the node sets being
$a_{\star}$ and $a_{\star}'$, $b_{\star}$ and $b_{\star}'$, and $c_{\star}$ and $c_{\star}'$ respectively.
We connect them in series, i.e.
each node of the first DAG has an edge to $b_0$, and each node of the second DAG has an edge to $c_0$. Then we take the transitive closure of the resulting DAG.

Note that any maximum reward path with $k+5$ hops on this DAG will necessarily use some $a_{\star}'$, $b_{\star}'$, and $c_{\star}'$.
If this were not the case, the value to be obtained would be less than 
\[M(k+4) + 2(T + M) \leq 2.5 T\]
However, a path containing some $a_i', b_j', c_l'$ for some $i,j,l$ will have weight at least
\[3T - 3Mk \geq 2.5 T\]

Furthermore, the first nodes of the path will be of the form
$a_0, a_1, \dots, a_i, a_i'$. To see this, suppose otherwise, i.e. that the path contains $j + 1 < i + 1$ nodes of the form $a_{\star}$. Then the total value of the 
part of the path up to $a_i'$ will be
\begin{align*}
& M (j + 1) + T + x_i - M i 
 \leq M(j + 1) + T + x_i - M - M j \\
& < M(j+1) + T + x_{j} - M j
 = a_0 + a_1 + \dots + a_j + a_j'
 \end{align*}
so the path $a_0,a_1, \dots, a_{j}, a_{j}'$ is always better and has the same number of edges.
A similar argument can be applied to the rest of the path, to show that the total maximum length $(k+5)$-hop path will be of the form
$a_0, \dots, a_{k_1}, a_{k_1}', b_0, \dots, b_{k_2}, b_{k_2}', c_0, \dots, c_{k_3}, c_{k_3}'$, with $k_1 + k_2 + k_3 = k$.
The only remaining case is that of the path containing some edge $(a_{k_1}',b_j')$, for some $j$ (and similarly for $(b_{k_2}',c_j)$). However in this case we can find a better path.
Suppose that $k_1 \geq 1$ (otherwise we can do it for $k_3$). Replace edges $(a_0,a_1)$ and $(a_1,u)$ of the path with edge $(a_0, u)$, essentially skipping over $a_1$,
and also replace edge $(a_{k_1}',b_j')$ by edges $(a_{k_1}',b_0)$ and $(b_0, b_j')$. Note that both the value and the length remained the same, but we use less than $k_1+1$ $a_{\star}$ nodes,
so this path is not optimal, as seen by the argument we stated before.

In light of the above, a maximum-weight $(k+5)$-hop path in this graph will be of the form
\[ a_0, \dots, a_{k_1}, a_{k_1}',b_0, \dots, b_{k_2}, b_{k_2}', c_0, \dots, c_{k_3}, c_{k_3}'\]
where $k_1+k_2+k_3=k$, and have value equal to
$3T + x_{k_1} + y_{k_2} + z_{k_3}$. Therefore computing $\max_{k_1+k_2+k_3=k} \{x_{k_1} + y_{k_2} + z_{k_3}\}$ is equivalent to finding the maximum value $(k+5)$-hop path on this DAG.
Note that if given in succinct form the number of edges is linear in $k$, so
any $O((mk)^{1-\epsilon})$ algorithm for this problem
yields
an $O(k^{2-\epsilon})$ algorithm for $(\max,+)$-convolution.
\end{proof}

As we saw in the introduction, one can solve the problem if the optimal value as a function of the capacity is concave. This is made formal in the following lemma:
\begin{lemma}[Concave functions]
Let $G$ be a node-weighted transitive DAG with $n$ vertices and $m$ edges, whose weights' absolute values are bounded by $M$, and 
let $f(x)$ be the maximum reward obtainable in a path of length $x$.
If $f$ is a concave function, then one can reduce the capacitated
problem (i.e. computing $f(k)$ for some $k$) to solving $O(\log(nM))$ uncapacitated problems with some fixed extra cost per item.
Since each one of these problems can be solved in $O(m)$ time, the total runtime is $O(m\log(nM))$.
\label{concavity}
\end{lemma}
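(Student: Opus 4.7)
The plan is to use the classical Lagrangian/parametric search trick (sometimes called the Aliens trick). Define $g_\lambda(x) = f(x) - \lambda x$ for a penalty parameter $\lambda \in \mathbb{R}$. For a fixed $\lambda$, solving the \emph{uncapacitated} problem where every edge traversed incurs an extra additive cost of $\lambda$ (equivalently, we subtract $\lambda$ from the weight of every non-source node, or penalize each edge by $\lambda$) produces a max-reward $s$-$t$ path whose objective equals $\max_{x} g_\lambda(x) = U(\lambda)$. Since $G$ is a DAG, this uncapacitated problem reduces to a single topological-order DP that runs in $O(m)$ time. Along with $U(\lambda)$ we also record, using standard lex-tiebreaking in the DP, the length $x^*(\lambda)$ of a longest optimal path (and, in a second pass, the length of a shortest optimal path).

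Because $f$ is concave, the function $g_\lambda$ is also concave in $x$, so its set of maximizers forms a contiguous interval $[x_{\min}(\lambda), x_{\max}(\lambda)]$, and both endpoints are monotone non-increasing in $\lambda$. Thus the longest-optimal-length function $x^*(\lambda)$ is non-increasing, and a binary search on $\lambda$ will efficiently locate the regime where $k$ lies inside the argmax interval. Concretely, I would binary search over integer $\lambda \in [-nM, nM]$ for the largest $\lambda^\star$ with $x_{\max}(\lambda^\star) \ge k$, and then check whether $x_{\min}(\lambda^\star) \le k$. Since node weights are integers bounded by $M$, the values of $f$ are integers in $[-nM, nM]$, and the slopes of the concave piecewise-linear envelope of $f$ are integers, so integer $\lambda$ suffice; the binary search terminates in $O(\log(nM))$ steps, and each step costs $O(m)$ for one uncapacitated solve, giving the claimed $O(m \log(nM))$ total.

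Once $\lambda^\star$ is found, concavity guarantees that $k$ is a maximizer of $g_{\lambda^\star}$, so $f(k) = g_{\lambda^\star}(k) + \lambda^\star k = U(\lambda^\star) + \lambda^\star k$, which is the desired answer. The one delicate point, and the main obstacle, is the ``flat region'' phenomenon: because multiple lengths $x$ can simultaneously attain $U(\lambda^\star)$, the uncapacitated solver on its own cannot directly return a path of the specific length $k$. This is precisely why we track both $x_{\min}$ and $x_{\max}$ via tiebroken DPs; the concavity of $f$ ensures $k \in [x_{\min}(\lambda^\star), x_{\max}(\lambda^\star)]$ whenever the binary search identifies $\lambda^\star$ correctly, so the value $U(\lambda^\star) + \lambda^\star k$ is valid even though we have not explicitly exhibited a length-$k$ path. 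This recovery of $f(k)$ from two lengths sandwiching $k$ is what makes the reduction to $O(\log(nM))$ uncapacitated problems go through.
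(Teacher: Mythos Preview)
Your proposal is correct and follows essentially the same Lagrangian relaxation (``Aliens trick'') approach as the paper: penalize each edge by $\lambda$, solve the uncapacitated problem in $O(m)$, and binary-search on $\lambda$ using concavity of $f$. Your treatment is in fact more careful than the paper's sketch, since you explicitly justify why integer $\lambda\in[-nM,nM]$ suffice (integer slopes of the concave envelope) and you handle the flat-region case via lex-tiebroken $x_{\min}/x_{\max}$, whereas the paper leaves these details implicit.
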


In Lemma~\ref{characterization} we give a complete graph-theoretic characterization of the graphs that have this concavity property and therefore can be solved efficiently.

\begin{lemma}[Concavity characterization]
The problem of finding a maximum reward path with at most $k$ edges in a transitive DAG
is concave for all choices of node weights if and only if for any path $u_1\rightarrow u_2 \rightarrow u_3$ and any node $v$ either $u_1\rightarrow v$ or $v\rightarrow u_3$ (Property $\mathcal{P}$).
\label{characterization}
\end{lemma}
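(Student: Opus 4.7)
My plan is to prove the two implications separately.

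For the necessity direction (failing $\mathcal{P}$ yields a bad weighting), given $u_1 \to u_2 \to u_3$ and $v$ with $u_1 \not\to v$ and $v \not\to u_3$, I would first observe that transitivity forces $v$ to be incomparable to each of $u_1, u_2, u_3$: e.g.\ $v \to u_1$ chained with $u_1 \to u_3$ would give $v \to u_3$, $u_2 \to v$ chained with $u_1 \to u_2$ would give $u_1 \to v$, and so on for the remaining four possibilities. I would then set $w(v) = 2$, $w(u_1) = w(u_2) = w(u_3) = 1$, and $w(x) = 0$ for every other vertex $x$. A direct check gives $f(0) = 2$ (take $\{v\}$), $f(1) = 2$ (the edges among $u_1, u_2, u_3$ have weight $2$ and no path of length $1$ can combine $v$ with a $u_i$ since they are pairwise incomparable), and $f(2) = 3$ (the path $u_1 \to u_2 \to u_3$; any path containing $v$ uses at most one additional nonzero vertex and has weight at most $2$). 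Hence $f(0) + f(2) = 5 > 4 = 2f(1)$ and concavity fails.

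For the sufficiency direction I would use an exchange argument to establish $f(k) + f(k+2) \leq 2 f(k+1)$ for every $k$. Let $P = (p_0, \ldots, p_k)$ and $Q = (q_0, \ldots, q_{k+2})$ attain $f(k)$ and $f(k+2)$. For each $j \in \{0, 1, \ldots, k+1\}$ consider the candidate pair
\[
R_j = (p_0, \ldots, p_{j-1}, q_{j+1}, \ldots, q_{k+2}), \qquad S_j = (q_0, \ldots, q_j, p_j, \ldots, p_k).
\]
Each has $k+2$ vertices and thus length $k+1$, and since every vertex of $P$ and $Q$ appears in exactly one of them we get $w(R_j) + w(S_j) = w(P) + w(Q)$. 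The pair consists of valid paths iff $p_{j-1} \to q_{j+1}$ (required for $R_j$; vacuous when $j = 0$) and $q_j \to p_j$ (required for $S_j$; vacuous when $j = k+1$).

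The heart of the proof is to locate such a $j$, which I would do by iterating downward from $j = k+1$. If $p_k \to q_{k+2}$ holds we are done; otherwise applying $\mathcal{P}$ to the $3$-chain $q_k \to q_{k+1} \to q_{k+2}$ with the vertex $p_k$ forces $q_k \to p_k$, which is exactly the second condition at $j = k$. Now test $j = k$: if $p_{k-1} \to q_{k+1}$ fails, invoke $\mathcal{P}$ on $q_{k-1} \to q_k \to q_{k+1}$ with vertex $p_{k-1}$ to deduce $q_{k-1} \to p_{k-1}$, supplying the second condition at $j = k-1$. Continuing, every failed step produces exactly the second condition required at the next smaller index, so the iteration must terminate by $j = 0$, whose sole requirement $q_0 \to p_0$ has then been obtained. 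Thus some valid $j$ exists and $2 f(k+1) \geq w(R_j) + w(S_j) = f(k) + f(k+2)$. The main subtlety I anticipate is the case $V(P) \cap V(Q) \neq \emptyset$, where $R_j$ or $S_j$ could repeat a vertex; I would address this either by choosing optimal $P$ and $Q$ to be vertex-disjoint when possible or by splitting along the first shared vertex and applying the construction on each sub-pair, so that the iterative use of $\mathcal{P}$ above carries over unchanged.
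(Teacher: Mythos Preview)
Your proof is correct and follows essentially the same exchange strategy as the paper: both directions match the paper's argument, with only cosmetic differences in the necessity construction (you use weights $2,1,1,1,0$ and look at $f(0),f(1),f(2)$, the paper uses $1+\epsilon,1,1,1,-\infty$ and looks one index higher), and in the sufficiency direction your downward iteration over $j$ is exactly the paper's three cases unrolled---the paper's Case~1 is your termination at $j$ near the start, Case~2 is termination at $j=k+1$, and Case~3 is termination at an intermediate index. One minor difference is that the paper tacitly takes $P$ and $Q$ to share endpoints, whereas your formulation does not; your version is the more general one and needs no extra assumption.

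Your closing worry about $V(P)\cap V(Q)\neq\emptyset$ is unnecessary, and the proposed fix (``splitting along the first shared vertex'') is not needed. Once you have the edges $p_{j-1}\to q_{j+1}$ and $q_j\to p_j$, the sequences $R_j$ and $S_j$ are \emph{walks} in a DAG and hence automatically simple: a repeated vertex would give a directed cycle. The only degenerate possibilities are coincidences like $p_{j-1}=q_{j+1}$ or $q_j=p_j$, but in those cases the required edge is a self-loop and hence fails, so your iteration simply moves to the next $j$ (and the alternative branch of $\mathcal{P}$, being a transitive edge along $Q$, does hold). So you can drop that caveat entirely.
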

\begin{proof}
Let $f(k)$ be the maximum reward obtainable with a path of exactly $k$ edges.

\noindent$\Rightarrow$: Let $G$ be a DAG for which property $\mathcal{P}$ doesn't hold. Let $u_1\rightarrow u_2\rightarrow u_3$ be the path of length $2$ and $v$ be the vertex that has no edge
to or from any of $u_1, u_2, u_3$. We set the node values as $val(u_1) = val(u_2) = val(u_3) = 1$, $val(v) = 1 + \epsilon$, and $-\infty$ for all other vertices. Then,
$f(1) = 1 + \epsilon$, $f(3) = 3$, but $f(2) = 2 < \frac{f(1) + f(3)}{2}$, therefore $f$ is not concave.

\noindent$\Leftarrow$: Suppose that property $\mathcal{P}$ is true. 
Now, let $P = (s, p_1, p_2, \dots, p_{k-1}, t)$ 
be a path of length $k$ such that $val(P) = f(k)$ and 
$Q = (s, q_1, q_2, \dots, q_{k+1}, t)$ be a path of length $k+2$ such that $val(Q) = f(k+2)$,
where $P$ and $Q$ can potentially have common vertices other than $s$ and $t$.
Since property $\mathcal{P}$ is true, we know that for any $i\in[k-1]$, there is either an edge 
from one of $q_i, q_{i+1}, q_{i+2}$ to $p_i$, or
from $p_i$ to one of $q_i, q_{i+1}, q_{i+2}$. 
By transitivity, this implies that either  
$q_i \rightarrow p_i$, or
$p_i\rightarrow q_{i+2}$.
We distinguish three cases. In all three cases we will be able to find paths $P'$ and $Q'$ with $k+1$ edges each, that contain all vertices of the form $p_i$ and $q_i$.

\newpar{Case 1: $q_1\rightarrow p_1$} \\
We pick $P' = (s, q_1, p_1, \dots, p_{k-1},t)$ and $Q' = (s, q_2, \dots, q_{k+1}, t)$.

\newpar{Case 2: $\forall i: p_i \rightarrow q_{i+2}$} \\
We pick $P' = (s, p_1, \dots, p_{k-1}, q_{k+1}, t)$ and $Q' = (s, q_1, \dots, q_k, t)$

\newpar{Case 3: $\exists i: p_i \rightarrow q_{i+2}$, $q_{i+1}\rightarrow p_{i+1}$} \\
We pick 
$P' = (s, p_1, \dots, p_i, q_{i+2}, \dots, q_{k+1}, t)$ and $Q' = (s, q_1, \dots, q_{i+1}, p_{i+1}, \dots, p_{k-1}, t)$.

\medskip
\noindent Therefore we established that in any case there exist such paths $P'$ and $Q'$. Now note that
\begin{align*}
\max\left\{val(P'), val(Q') \right\} \geq \frac{1}{2} \left(val(P') + val(Q')\right) = \frac{1}{2}\left(val(P) + val(Q)\right)
\end{align*}
and therefore $f$ is a concave function.
\end{proof}

As mentioned before, even very simple special cases of the model capture a lot of important problems.
In the following lemma, we show that we can solve the $k$-sparse $\Delta$-separated subsequence problem~\cite{HDC09}
in near-linear time using the main result of this section, thus recovering recent results of \cite{BS17,MMS18}.

\begin{lemma}[Maximum-weight $k$-sparse $\Delta$-separated subsequence]
Given a sequence $a_1, \dots, a_n$, find indices $i_1, i_2, \dots, i_k$ such that for all $j\in[k-1]$, $i_{j+1} \geq i_j + \Delta$ 
and the sum $\sum\limits_{j\in[k]} a_{i_j}$ is maximized.
This problem can be solved in $O(n\log\left(n\max_i \left|a_i\right|\right))$ time.
\end{lemma}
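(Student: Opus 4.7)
The plan is to cast the problem as a maximum-reward length-$k$ path in a node-weighted transitive DAG satisfying property $\mathcal{P}$, invoke Lemma~\ref{characterization} to deduce concavity, and finally apply Lemma~\ref{concavity} together with a problem-specific speedup for the uncapacitated subproblems.

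First, I would build the following DAG $G$. The vertex set is $\{s, 1, 2, \dots, n, t\}$ with $val(i) = a_i$ and $val(s)=val(t)=0$. We put an edge $(i,j)$ for $1 \le i < j \le n$ whenever $j \ge i + \Delta$, an edge $(s,i)$ for all $i$, and an edge $(i,t)$ for all $i$. A valid $k$-sparse $\Delta$-separated subsequence $i_1 < \dots < i_k$ corresponds exactly to a path $s \to i_1 \to \dots \to i_k \to t$ of length $k+1$ with reward $\sum_j a_{i_j}$, so the problem reduces to finding a maximum reward path of length $k+1$ in $G$. Transitivity is immediate: if $j \ge i+\Delta$ and $\ell \ge j+\Delta$ then $\ell \ge i+2\Delta \ge i+\Delta$; the edges incident to $s$ and $t$ obviously preserve transitivity.

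Next I would verify property $\mathcal{P}$. Take any path $u_1 \to u_2 \to u_3$ and any vertex $v$. The boundary cases where some $u_i\in\{s,t\}$ are trivial since $s$ connects to everything and everything connects to $t$. Otherwise $u_1,u_2,u_3$ are interior vertices with $u_3 \ge u_1 + 2\Delta$. For any interior $v$, either $v \ge u_1+\Delta$, in which case $(u_1,v)$ is an edge, or $v < u_1+\Delta \le u_3 - \Delta$, in which case $(v,u_3)$ is an edge. If $v\in\{s,t\}$, the corresponding edge is present by construction. Hence $\mathcal{P}$ holds, and Lemma~\ref{characterization} gives that $f(x)$, the maximum reward of a length-$x$ path, is concave in $x$.

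By Lemma~\ref{concavity}, computing $f(k+1)$ reduces to solving $O(\log(nM))$ uncapacitated subproblems, where $M := \max_i |a_i|$. Each such subproblem asks, for a fixed penalty $\lambda$ subtracted from every traversed vertex, for the maximum total reward of a path from $s$ to $t$ of arbitrary length. The main obstacle is that the DAG $G$ has $\Theta(n^2)$ edges, so the generic $O(m)$ bound from Lemma~\ref{concavity} would only give $O(n^2 \log(nM))$. The key observation is that on this specific graph the uncapacitated problem admits a linear-time DP: letting $h(i)$ denote the best reward of an $s\to i$ path ending at $i$ with penalty $\lambda$ per chosen vertex, we have
\[
h(i) = (a_i - \lambda) + \max\bigl(0,\; \max_{j \le i-\Delta} h(j)\bigr),
\]
which can be maintained in $O(1)$ per step using a running prefix maximum, and the final answer is $\max_i h(i)$. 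Thus each subproblem runs in $O(n)$ time, yielding the claimed total of $O(n\log(nM))$.
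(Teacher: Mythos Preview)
Your proposal is correct and follows the same high-level route as the paper: model the problem as a maximum-reward bounded-length path in a node-weighted transitive DAG, verify property $\mathcal{P}$, invoke Lemma~\ref{characterization} for concavity, and then use the Lagrangian reduction of Lemma~\ref{concavity}.

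The one point of genuine difference is how you beat the naive $O(n^2)$ cost of the Lagrangian subproblems. The paper does not exploit a problem-specific DP; instead it \emph{sparsifies the graph} by introducing auxiliary vertices $v_1,\dots,v_n$ (with weight $-\infty$ so they never appear in an optimal solution) and edges $u_i\to v_i$, $v_i\to v_{i+1}$, $v_i\to u_{i+\Delta}$, bringing the edge count down to $O(n)$ so that the generic $O(m)$ bound of Lemma~\ref{concavity} applies directly. Your approach keeps the dense graph but observes that the uncapacitated subproblem has the special form $h(i)=(a_i-\lambda)+\max\bigl(0,\max_{j\le i-\Delta}h(j)\bigr)$, solvable in $O(n)$ with a running prefix maximum. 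Both yield the same bound; your argument is arguably more direct, while the paper's sparsification is a graph-level trick that stays within the black-box framework of Lemma~\ref{concavity} without touching the DP.
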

\begin{proof}
Let's define a simple node-weighted DAG for this problem. 
We define a sequence of vertices $u_1, \dots, u_n$ each one of which corresponds to picking an element from the sequence.
Then, we add an edge $u_i\rightarrow u_j$ iff $j - i \geq \Delta$.
Furthermore, for all $i$, $val(u_i) = a_i$.
It remains to prove that it satisfies the property of Lemma~\ref{characterization}.
Consider any length-$2$ path $u_i\rightarrow u_j\rightarrow u_k$. We know that both $k-j$ and $j-i$ are at least $\Delta$.
Now, for any $u_p$ we have that
\begin{align*}
\max\{\left|u_p - u_k\right|, \left|u_p - u_i\right|\} 
\geq \frac{1}{2}\left(\left|u_p - u_k\right| + \left|u_p-u_i\right|\right)
\geq \frac{1}{2}\left(\left|u_k - u_i\right|\right)
\geq \frac{1}{2} 2 \Delta
= \Delta 
\end{align*}
so there is an edge between $u_p$ and either $u_i$ or $u_k$. Therefore by Lemma~\ref{concavity}
the problem can be solved in time $O(m\log(n\max_i |a_i|)) = O(n^2 \log(n\max_i |a_i|))$.

The quadratic runtime stems from the fact that the DAG we constructed is dense. In fact, we can do better by defining some auxiliary vertices $v_1, \dots, v_n$.
The values of these extra vertices will be set to $-\infty$ to ensure that they aren't used in any solution and thus don't break the concavity.
Instead of edges between vertices $u_i$, we only add the following edges
\begin{itemize}
\item{$u_i\rightarrow v_i$ for all $i$}
\item{$v_i\rightarrow u_{i+\Delta}$ for all $i + \Delta \leq n$}
\item{$v_i\rightarrow v_{i+1}$ for all $i + 1 \leq n$}
\end{itemize}
Now, the number of edges is $O(n)$ and so the runtime becomes $O(n\log(n\max_i |a_i|))$.
\end{proof}

As another example of a problem that can be modeled as a capacitated maximum-reward path problem in a DAG, we consider the Max-Weight Increasing Subsequence of length $k$ problem.
In contrast to its uncapacitated counterpart, which is solvable in linear time, the capacitated version requires quadratic time, assuming $(\max,+)$-convolution hardness, as witnessed 
in the following lemma.

\begin{lemma}[Max-Weight Increasing Subsequence of length $k$]
Given a sequence $a_1, \dots, a_n$ with respective weights $w_1, \dots, w_n$, 
find indices $i_1 < i_2 < \dots < i_k$ such that for all $j\in[k-1]$, $a_{i_j} \leq a_{i_{j+1}}$ 
and the sum $\sum\limits_{j\in[k]} w_{i_j}$ is maximized.
This problem is $(\max,+)$-convolution hard, i.e. requires $\Omega((nk)^{1-o(1)})$ time assuming $(\max,+)$-convolution hardness. 
\end{lemma}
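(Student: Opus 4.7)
The plan is to reduce from the 3-sum formulation of $(\max,+)$-convolution: given sequences $a_0,\dots,a_m$, $b_0,\dots,b_m$, $c_0,\dots,c_m$ of length $m+1$ with entries of absolute value at most $V_{\max}$, compute $\max_{i+j+l=m}\{a_i+b_j+c_l\}$, which the hypothesis asserts requires $\Omega(m^{2-o(1)})$ time. I will construct an instance of Max-Weight Increasing Subsequence of length $k$ with sequence length $n=O(m)$ and subsequence length $k=m+3$, so that $nk=O(m^2)$; an algorithm for MIS-$k$ in time $O((nk)^{1-\epsilon})$ would then solve the convolution problem in time $O(m^{2-2\epsilon})$, contradicting the hypothesis.

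The sequence will consist of three blocks, one per input sequence. Block $X$ (encoding $a$) will begin with $m$ ``padder'' items $p^X_1,\dots,p^X_m$ of weight $0$ with strictly increasing values $2,4,\dots,2m$, followed by $m+1$ ``content'' items placed in reverse index order $A_m, A_{m-1}, \dots, A_0$ with strictly decreasing values $2m+1, 2m-1, \dots, 1$ and weights $w(A_i) = L + a_i - D\cdot i$ for parameters $L,D$ to be chosen. Blocks $Y$ and $Z$ are built analogously using $b$ and $c$, with their value ranges shifted upward by large enough constants so that every value in block $X$ is strictly below every value in block $Y$, which in turn is strictly below every value in block $Z$. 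The construction has two key structural properties: content items within a block appear in strictly decreasing value order, so any non-decreasing subsequence contains at most one content item per block; and the comparison $v(p^X_m)=2m \leq v(A_i)=2i+1$ reduces to $m \leq i$, so selecting $A_i$ permits at most $i$ padders from block $X$ to precede it (and analogously for blocks $Y$ and $Z$).

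With $D$ chosen larger than the maximum variation of $\max_{i+j+l=s}\{a_i+b_j+c_l\}$ in $s$ (for instance $D = 6V_{\max}+1$) and $L$ much larger than $Dm$, the optimal MIS-$k$ subsequence picks exactly one content item per block, $A_i$, $B_j$, $C_l$, together with $m$ padders distributed as $s_X+s_Y+s_Z=m$ subject to $s_X\leq i$, $s_Y\leq j$, $s_Z\leq l$. A feasible distribution exists iff $i+j+l \geq m$, and any such distribution yields total weight $3L + (a_i+b_j+c_l) - D(i+j+l)$. The penalty term $-D(i+j+l)$ forces the maximum to occur precisely at $i+j+l=m$, so the MIS-$k$ value equals $3L - Dm + \max_{i+j+l=m}\{a_i+b_j+c_l\}$, from which the convolution answer is read off.

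The main obstacle I expect is verifying that subsequences with fewer than three content items are strictly suboptimal: each missing content item loses roughly $L$ in weight while saving only padder slots of weight $0$, so taking $L$ much larger than $Dm + V_{\max}$ makes the three-content configuration dominant. The bookkeeping across the seven cases (one per nonempty subset of blocks missing a content item) is routine, but the constants $L$ and $D$ must be written down explicitly to close the argument.
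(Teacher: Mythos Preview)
Your proposal is correct and takes essentially the same approach as the paper: three blocks, each containing ``filler'' elements and one ``exit'' element per index, where choosing exit $i$ allows at most $i$ fillers from that block, a large additive bonus $L$ forces one exit per block, and a linear penalty $-Di$ pins the optimum to $i+j+l=m$. The paper phrases this as an embedding of the DAG from Lemma~\ref{transitive} into an increasing-subsequence instance (your padders correspond to the path nodes $a_i$ of weight $M$, your content items $A_i$ to the exit nodes $a_i'$ of weight $T+x_i-Mi$, and your $L,D$ play the roles of the paper's $T,M$), but the underlying gadget and the arithmetic are the same.
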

\begin{proof}
Consider the construction used in Lemma~\ref{transitive}. 
We define an instance of the Max-Weight Increasing Subsequence of length $k$
problem which contains an element for each node of the DAG. Specifically, let's define our sequence to be 
\begin{align*}
x_0, x_0', x_1, x_1',\dots, x_k, x_k', y_0, y_0', \dots, y_k, y_k', z_0, z_0',\dots, z_k, z_k'
\end{align*}
\begin{align*}
\text{with}\ \ \ 
&x_i = i &x_i' = 2k+1 - i\\
&y_i = 2k+2 + i &y_i' = 4k+3 - i\\
&z_i = 4k+4 + i &z_i' = 6k+5 - i
\end{align*}
where the 
weight of each element is equal to the weight of the corresponding node in the DAG (i.e. 
$x_\star\leftrightarrow a_\star$,
$x_\star'\leftrightarrow a_\star'$,
$y_\star\leftrightarrow b_\star$,
$y_\star'\leftrightarrow b_\star'$,
$z_\star\leftrightarrow c_\star$,
$z_\star'\leftrightarrow c_\star'$)

By definition of the sequence, the fact that we are looking for increasing subsequences
implies that there is a $1-1$ correspondence between length-$k$ increasing subsequences and $(k-1)$-hop paths of the original DAG.
Therefore any $O((nk)^{1-\epsilon})$ algorithm for the Max-Weight Increasing Subsequence of length $k$ problem implies a truly subquadratic algorithm for the $(\max,+)$-convolution problem.
\end{proof}

	%!TEX root =  main.tex
\section{Graphs with Monge Weights}
\label{s:monge}
In this section we study the problem of computing maximum-reward paths with at most $k$ edges in a DAG with edge weights satisfying the Monge property. 
Using the elegant algorithm of \cite{BLP92}, one can compute a single such path in $\widetilde{O}(n)$ time.

\begin{lemma}[From \cite{BLP92}]
Given a DAG with Monge weights,
with $n$ vertices, a pair of vertices $s$ and $t$, and a positive integer $k$, we can compute a maximum reward path from $s$ to $t$ that uses at most $k$ edges, in time 
$\widetilde{O}(n)$. 
\label{mongelinear}
\end{lemma}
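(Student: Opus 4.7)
My plan is the concavity-plus-Lagrangian strategy outlined just before the lemma: first establish that $f(\ell) := $ max weight of an $s$-to-$t$ path with exactly $\ell$ edges is concave in $\ell$ when the adjacency matrix is Monge; then reduce the capacitated problem to $O(\log(nM))$ unrestricted max-weight $s$-to-$t$ path queries via Lagrangian relaxation (where $M$ bounds the absolute edge weights); and finally solve each query in $\widetilde{O}(n)$ time by invoking SMAWK on the induced Monge dynamic program.

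For concavity, I would prove the exchange property that any two $s$-$t$ paths of lengths $\ell$ and $\ell+2$ can be converted into two $s$-$t$ paths of length $\ell+1$ each whose total weight is at least the sum of the originals. The argument aligns the two paths position by position and, at every pair of ``crossing'' edges, invokes the Monge inequality $A_{i,j}+A_{i+1,j+1}\geq A_{i+1,j}+A_{i,j+1}$ to uncross them without losing weight, in the same spirit as the exchange in the proof of Lemma~\ref{characterization}. Applied to the optimal paths of lengths $\ell$ and $\ell+2$, this yields $f(\ell+1)\geq \tfrac12\bigl(f(\ell)+f(\ell+2)\bigr)$, i.e.\ concavity.

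Given concavity, I would apply Lagrangian relaxation. For $\lambda\in\mathbb{R}$, subtract $\lambda$ from every edge weight; this preserves the Monge property, and the unrestricted max-weight $s$-$t$ path $h(\lambda)$ in the shifted graph equals $\max_{\ell}\{f(\ell)-\lambda\ell\}$. Concavity of $f$ implies that the length of any $h(\lambda)$-optimizer is monotone non-increasing in $\lambda$, so binary-searching $\lambda\in[-nM,nM]$ in $O(\log(nM))$ steps isolates either a value $\lambda^\star$ whose optimizer has exactly $k$ edges (recovering $f(k)=h(\lambda^\star)+\lambda^\star k$) or two adjacent breakpoints whose optimizers sandwich $k$, from which $f(k)$ is recovered by interpolation of the two corresponding solutions. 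The main obstacle is the unrestricted subroutine itself: computing a max-weight $s$-$t$ path in an $n$-vertex Monge DAG in $\widetilde{O}(n)$ time without explicitly iterating over the $\Theta(n^2)$ candidate edges. Following BLP92, I would organize the dynamic program $g(v)=\max_{u}\{g(u)+A_{u,v}\}$ as a sequence of implicit Monge row-maxima computations handled by SMAWK, which gives $\widetilde{O}(n)$ total work per query and hence the desired $\widetilde{O}(n)$ bound overall.
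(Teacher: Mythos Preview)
The paper does not supply its own proof of this lemma; it is stated as a citation to \cite{BLP92}, and the surrounding text (in the introduction and in Lemma~\ref{concavity}) only sketches the high-level idea: concavity of $f$ from the Monge property, Lagrangian relaxation, and binary search over the penalty $\lambda$. Your proposal reconstructs exactly this outline, so there is nothing to compare against beyond confirming that your plan matches the cited approach.

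One technical point worth tightening: the unrestricted subproblem $g(v)=\max_{u<v}\{g(u)+A_{u,v}\}$ is an \emph{online} row-maxima problem, since $g(u)$ becomes available only after earlier rows are processed; a single offline SMAWK call does not directly apply. The standard fix (as in \cite{BLP92} and related work such as Wilber, Galil--Park, or Klawe--Kleitman) is an online/divide-and-conquer variant of totally monotone matrix searching, which still runs in $\widetilde{O}(n)$. Your phrase ``a sequence of implicit Monge row-maxima computations'' and the $\widetilde{O}(n)$ budget suggest you are aware of this, but it would be worth making the online nature explicit.
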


Given the adjacency matrix $A$ of the DAG, one can see this equivalently as computing one element of the
matrix power $A^k$ in the tropical semiring (i.e. we replace $(+,\cdot)$ with $(\max,+)$).
Therefore, an important question is whether a whole row or column of $A^k$ can be computed efficiently rather. This corresponds to finding maximum reward paths with $k$ edges from some vertex $s$ to \emph{all} other vertices, or finding maximum reward paths with $k$ edges from some vertex $s$ to some vertex $t$ for all $k$. In Lemma~\ref{khop} we show that one needs $\Omega(n^{3/2})$ time to compute a column of $A^k$ in general.

\begin{lemma}
Given a DAG with Monge weights,
with $n$ vertices, computing the maximum weight path of length $k$ from a given $s$ to all other nodes $t$ requires $\Omega(n^{1.5})$ time.
\label{khop}
\end{lemma}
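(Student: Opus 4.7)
I would prove the lower bound by reducing from the $(\max,+)$-convolution hardness hypothesis. The goal is to exhibit, for any instance of $(\max,+)$-convolution on sequences of length $N$, a Monge-weighted DAG on $n = \Theta(N^{4/3})$ vertices with a designated source $s$ such that the max-weight length-$k$ paths from $s$ to all sinks jointly encode the entire convolution $a \oplus b$. Plugging the conjectured $\Omega(N^{2-o(1)})$ lower bound into $N = n^{3/4}$ then yields $\Omega(n^{3/2-o(1)})$ time for the DAG problem.

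\textbf{Construction.} I would split each input sequence of length $N$ into $N^{1/3}$ contiguous blocks of size $N^{2/3}$. For each ordered pair of blocks $(A_p, B_q)$ I would build a three-layer gadget whose natural edge weights compute the convolution $A_p \oplus B_q$ by length-$2$ paths from a local source through a middle layer of $N^{2/3}$ vertices (indexed by positions in $A_p$) to a local sink layer of $O(N^{2/3})$ vertices (indexed by output positions of $A_p \oplus B_q$). I would then place a global source $s$ with edges to every local source and funnel every gadget's sink layer into a common set of $2N - 1$ global sinks, one per output position of $a \oplus b$, so that the max-weight length-$k$ path from $s$ to global sink $\ell$ maximizes over all block-pair contributions to position $\ell$ and therefore equals $(a \oplus b)_\ell$. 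With $N^{2/3}$ block-pair gadgets each of size $\Theta(N^{2/3})$, the total number of vertices is $\Theta(N^{4/3})$, as promised.

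\textbf{Main obstacle.} The technical crux is enforcing the global Monge property. Within each gadget, the local weight matrix need not be Monge, but the block size $N^{2/3}$ is small enough that a uniform quadratic perturbation $A'_{r,c} = A_{r,c} - C(r-c)^2$, with $C$ chosen based on the maximum weight magnitude, restores Monge while contributing only a fixed additive offset along every length-$k$ path ending at a given sink, which cancels in the max comparison. Across gadgets, the ``non-existent'' edges cannot be set to $-\infty$ without destroying Monge, so I would replace them with carefully chosen large negative finite values that simultaneously (i) preserve every $2 \times 2$ Monge inequality across gadget boundaries, source rows, and sink columns, and (ii) are negative enough that any optimal $k$-hop path to a global sink stays inside a single intended gadget. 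The main work is a case analysis that verifies these Monge inequalities across all pair types---intra-gadget, cross-gadget, and source/sink mixing---and chooses the perturbation and offset constants consistently. Once this is in place, any $o(n^{3/2})$-time algorithm for the column-of-$A^k$ problem solves the embedded $(\max,+)$-convolution in $o(N^2)$ time, contradicting the hardness hypothesis.
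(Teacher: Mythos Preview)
Your approach differs fundamentally from the paper's, and it also contains a real gap.

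The paper proves an \emph{unconditional} lower bound. It exhibits a single explicit complete Monge DAG with weights $w(i,j)=-(i-j)^2$ and argues that, for $k=\Theta(\sqrt{n})$, the optimal length-$k$ paths from vertex $0$ to the various targets $t$ jointly use $\Omega(n^{1.5})$ distinct edges; it then perturbs every edge by independent noise so that any correct algorithm must read all of those edges just to output the path weights. This is an information-theoretic (query-complexity) argument and does not invoke any hypothesis. Your reduction, even if it went through, would only yield a bound conditional on $(\max,+)$-convolution hardness, which is strictly weaker than what the lemma states.

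More seriously, the central step of your construction does not work. You assert that subtracting $C(r-c)^2$ from every edge weight makes each gadget Monge while ``contributing only a fixed additive offset along every length-$k$ path ending at a given sink.'' But along a path $s=v_0,v_1,\ldots,v_k$ the total perturbation is $-C\sum_{i=1}^{k}(v_i-v_{i-1})^2$, which depends on the intermediate vertices, not only on the endpoints and $k$. Concretely, in your three-layer gadget the length-$2$ path $s\to j\to \ell$ picks up $-C\bigl(j^2+(j-\ell)^2\bigr)$, so what is maximized over $j$ is $a_j+b_{\ell-j}-C(2j^2-2j\ell+\ell^2)$ rather than $a_j+b_{\ell-j}$; the argmax shifts and you no longer compute the convolution. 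Any perturbation strong enough to force the arbitrary Toeplitz matrix $W_{j,\ell}=b_{\ell-j}$ to be Monge must be non-separable in $(j,\ell)$, and any non-separable additive term distinguishes paths through different middle vertices. This is consistent with Lemma~\ref{conv}: Monge structure makes $(\max,+)$-convolution linear-time, so one should not expect to embed arbitrary convolutions into a Monge matrix without distortion.
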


On the positive side, by further exploiting the Monge property, in Lemma~\ref{forall}
we present an algorithm that can compute any row or column of $A^k$ in $\widetilde{O}(nnz(A)) = \widetilde{O}(m)$ time.

\begin{lemma}
Let $G$ be a DAG of $n$ vertices and $m$ edges equipped with Monge weights that are integers of absolute value at most $M$.
Given a vertex $s$, and a positive integer $k$, we can compute a maximum reward
path from $s$ to $t$ that uses at most $k$ edges, for all $t$, in time 
$O(m\log n \log (nM))$.
Furthermore, if we are given a pair of vertices $s$ and $t$, we can compute as maximum reward path from $s$ to $t$ that uses at most $k$ edges, for all $k\in[n]$, in time 
$O(m\log n \log (nM))$.
\label{forall}
\end{lemma}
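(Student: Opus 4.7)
Both parts rest on Lagrangian relaxation, enabled by the Monge property. Specifically, Monge weights imply that for any fixed pair $s, t$, the maximum reward over exactly-$k$-edge paths from $s$ to $t$ is concave in $k$; this is the structural fact underlying Lemma~\ref{mongelinear}, and it lets the reduction of Lemma~\ref{concavity} apply in the edge-weighted setting. Concretely, one topological DAG dynamic program with every edge reward decremented by $\lambda$ costs $O(m)$ and returns, for every vertex $t$ simultaneously, both the penalized optimum $g_{s,t}(\lambda) = \max_k\{f_{s,t}(k) - \lambda k\}$ and the length $k_{s,t}(\lambda)$ of an achieving path. Recovering the capacitated value $f_{s,t}(k)$ from $g$ reduces to locating, for each target $(t, k)$, the correct multiplier $\lambda^{*} \in [-M, M]$.

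For the first claim (fixed $s, k$, all $t$), the correct multiplier $\lambda_t^{*}$ depends on $t$, so a single oracle call does not simultaneously solve every target. The plan is a parallel binary search over $\lambda \in [-M, M]$ in $O(\log(nM))$ rounds, maintaining a candidate interval $[\ell_t, r_t]$ for each vertex. In each round, vertices are grouped by the midpoints they would query, and the oracle is invoked once per distinct midpoint; each such invocation subdivides every currently-active group according to whether $k_{s,t}(\lambda_{\mathrm{mid}})$ exceeds the target $k$. Using that $\lambda \mapsto k_{s,t}(\lambda)$ is a monotone step function and every vertex seeks the same integer $k$, a careful bucketing argument shows that only $O(\log n)$ distinct midpoints are active per round, giving a per-round cost of $O(m \log n)$ and a total of $O(m \log n \log(nM))$.

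For the second claim (fixed $s, t$, all $k$), I would recover the at most $n+1$ breakpoints of the piecewise-linear convex Legendre transform $g_{s,t}(\cdot)$ by divide-and-conquer on the $\lambda$-axis. At each recursive node, an $O(\log(nM))$ binary search over $\lambda$ localizes the breakpoint splitting the remaining $k$-range into two halves, and one recurses on the two subranges with appropriately tightened $\lambda$-windows. The recursion tree has depth $O(\log n)$ (because at each step the number of remaining $k$-values halves), and by sharing a single Lagrangian query among all subintervals touching the same $\lambda$ the per-level work collapses to $O(m \log(nM))$, yielding the desired $O(m \log n \log(nM))$ bound.

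The main obstacle in both parts is extracting only a $\log n$ overhead --- rather than the naive $n$ --- from the ``all-targets'' or ``all-capacities'' dimension: a vertex-by-vertex (or capacity-by-capacity) application of Lemma~\ref{concavity} would cost $\Omega(nm \log(nM))$. The crux is the amortization that lets a single $O(m)$ oracle call at penalty $\lambda$ serve many still-unresolved searches simultaneously, and verifying that the Monge-induced monotonicity of $k_{s,t}(\lambda)$ and concavity of $f_{s,t}$ indeed keep the number of distinct queries polylogarithmic at every stage.
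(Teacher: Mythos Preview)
Your plan has a genuine gap in both parts: the amortization that turns the naive $n$-factor into $\log n$ is asserted but not substantiated, and I do not see how to make it go through using only concavity of $f_{s,t}(k)$.

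For the first claim, consider your parallel binary search. In round~$r$ the number of distinct active intervals is $\min(2^{r-1}, n)$: after the first query the vertices split into two buckets according to the sign of $k_{s,t}(\lambda_{\mathrm{mid}})-k$; each bucket has its own midpoint in the next round, and these midpoints are genuinely different because the critical multipliers $\lambda_t^*$ vary with $t$. Nothing in ``$\lambda\mapsto k_{s,t}(\lambda)$ is monotone'' or ``all vertices seek the same $k$'' forces these buckets to coalesce. So the number of oracle calls per round grows geometrically until it hits $n$, and the total work is $\Theta(nm\log(nM))$, not $O(m\log n\log(nM))$. For the second claim, your divide-and-conquer on the $k$-axis has the same issue: at level~$\ell$ you have $2^\ell$ subranges with pairwise \emph{disjoint} $\lambda$-windows (since $k$ is monotone in $\lambda$), so no Lagrangian query can be shared across them; the per-level cost is $2^\ell\cdot O(m\log(nM))$ and summing over $\ell$ gives $\Theta(nm\log(nM))$.

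What the paper does instead is exploit a strictly stronger consequence of the Monge property than concavity: as the penalty $\lambda$ increases, the outgoing edge chosen from each vertex in the optimal tree moves monotonically rightward (their Lemma after the statement). This lets them divide-and-conquer on the \emph{edge set} rather than on targets or capacities. One picks a $\lambda$ (by binary search, at cost $O(m\log(nM))$) whose shortest-path tree splits the remaining non-tree edges roughly in half; the recursion $r(m)=2r(m/2)+O(m\log(nM))$ then gives the $\log n$ factor directly. The leaves of this recursion implicitly encode the shortest-path trees for every $\lambda$, from which both ``all $t$, fixed $k$'' and ``all $k$, fixed $t$'' can be read off. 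Your proposal never invokes this edge-monotonicity, which is the actual engine behind the stated bound.
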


\paragraph{Acknowledgments}

We are grateful to Arturs Backurs for insightful discussions that helped us improve this work.

\bibliography{main}

\begin{thebibliography}{10}

\bibitem{SMAWK}
Alok Aggarwal, Maria~M Klawe, Shlomo Moran, Peter Shor, and Robert Wilber.
\newblock Geometric applications of a matrix-searching algorithm.
\newblock {\em Algorithmica}, 2(1-4):195--208, 1987.

\bibitem{AST94}
Alok Aggarwal, Baruch Schieber, and Takeshi Tokuyama.
\newblock Finding a minimum-weightk-link path in graphs with the concave monge
  property and applications.
\newblock {\em Discrete \& Computational Geometry}, 12(3):263--280, 1994.

\bibitem{BIS17}
Arturs Backurs, Piotr Indyk, and Ludwig Schmidt.
\newblock Better approximations for tree sparsity in nearly-linear time.
\newblock In {\em Proceedings of the Twenty-Eighth Annual ACM-SIAM Symposium on
  Discrete Algorithms}, pages 2215--2229. SIAM, 2017.

\bibitem{BHSS18}
MohammadHossein Bateni, MohammadTaghi Hajiaghayi, Saeed Seddighin, and Cliff
  Stein.
\newblock Fast algorithms for knapsack via convolution and prediction.
\newblock In {\em Proceedings of the 50th Annual {ACM} {SIGACT} Symposium on
  Theory of Computing, {STOC} 2018, Los Angeles, CA, USA, June 25-29, 2018},
  pages 1269--1282, 2018.

\bibitem{BLP92}
WW~Bein, LL~Larmore, and JK~Park.
\newblock The d-edge shortest-path problem for a monge graph.
\newblock Technical report, Sandia National Labs., Albuquerque, NM (United
  States), 1992.

\bibitem{Bellman57}
Richard Bellman.
\newblock Dynamic programming (dp).
\newblock 1957.

\bibitem{Bringmann17}
Karl Bringmann.
\newblock A near-linear pseudopolynomial time algorithm for subset sum.
\newblock In {\em Proceedings of the Twenty-Eighth Annual ACM-SIAM Symposium on
  Discrete Algorithms}, pages 1073--1084. Society for Industrial and Applied
  Mathematics, 2017.

\bibitem{BS17}
Henning Bruhn and Oliver Schaudt.
\newblock Fast algorithms for delta-separated sparsity projection.
\newblock {\em arXiv preprint arXiv:1712.06706}, 2017.

\bibitem{CMWW17}
Marek Cygan, Marcin Mucha, Karol Wegrzycki, and Michal Wlodarczyk.
\newblock On problems equivalent to (min, +)-convolution.
\newblock In {\em 44th International Colloquium on Automata, Languages, and
  Programming, {ICALP} 2017, July 10-14, 2017, Warsaw, Poland}, pages
  22:1--22:15, 2017.

\bibitem{EW18}
Friedrich Eisenbrand and Robert Weismantel.
\newblock Proximity results and faster algorithms for integer programming using
  the steinitz lemma.
\newblock In {\em Proceedings of the Twenty-Ninth Annual ACM-SIAM Symposium on
  Discrete Algorithms}, pages 808--816. Society for Industrial and Applied
  Mathematics, 2018.

\bibitem{HDC09}
Chinmay Hegde, Marco~F Duarte, and Volkan Cevher.
\newblock Compressive sensing recovery of spike trains using a structured
  sparsity model.
\newblock In {\em SPARS'09-Signal Processing with Adaptive Sparse Structured
  Representations}, 2009.

\bibitem{JR18}
Klaus Jansen and Lars Rohwedder.
\newblock On integer programming and convolution.
\newblock {\em arXiv preprint arXiv:1803.04744}, 2018.

\bibitem{KX17}
Konstantinos Koiliaris and Chao Xu.
\newblock A faster pseudopolynomial time algorithm for subset sum.
\newblock In {\em Proceedings of the Twenty-Eighth Annual ACM-SIAM Symposium on
  Discrete Algorithms}, pages 1062--1072. SIAM, 2017.

\bibitem{KPS17}
Marvin K{\"{u}}nnemann, Ramamohan Paturi, and Stefan Schneider.
\newblock On the fine-grained complexity of one-dimensional dynamic
  programming.
\newblock In {\em 44th International Colloquium on Automata, Languages, and
  Programming, {ICALP} 2017, July 10-14, 2017, Warsaw, Poland}, pages
  21:1--21:15, 2017.

\bibitem{MMS18}
Aleksander M{\k{a}}dry, Slobodan Mitrovi{\'c}, and Ludwig Schmidt.
\newblock A fast algorithm for separated sparsity via perturbed lagrangians.
\newblock In {\em International Conference on Artificial Intelligence and
  Statistics, {AISTATS} 2018, 9-11 April 2018, Playa Blanca, Lanzarote, Canary
  Islands, Spain}, pages 20--28, 2018.

\bibitem{Pisinger99}
David Pisinger.
\newblock Linear time algorithms for knapsack problems with bounded weights.
\newblock {\em Journal of Algorithms}, 33(1):1--14, 1999.

\bibitem{Schieber98}
Baruch Schieber.
\newblock Computing a minimum weightk-link path in graphs with the concave
  monge property.
\newblock {\em Journal of Algorithms}, 29(2):204--222, 1998.

\bibitem{Wilber88}
Robert Wilber.
\newblock The concave least-weight subsequence problem revisited.
\newblock {\em Journal of Algorithms}, 9(3):418--425, 1988.

\bibitem{Williams14}
Ryan Williams.
\newblock Faster all-pairs shortest paths via circuit complexity.
\newblock In {\em Proceedings of the forty-sixth annual ACM symposium on Theory
  of computing}, pages 664--673. ACM, 2014.

\end{thebibliography}

\appendix
%!TEX root =  main.tex
\section{Omitted Proofs}

\begin{prevproof}{Lemma}{concavity}
Instead of computing $f(k)$, we will solve the Lagrangian relaxation $\underset{k}{\max}\left\{f(k) - \lambda k\right\}$, where $\lambda \geq 0$ is some parameter.
Note that usually this problem is much easier to solve, since one can incorporate an extra cost of $\lambda$ to the value of each item.
This can be solved by subtracting $\lambda$ from all node weights and computing the maximum-reward path and so can be done in linear time.
Because of the concavity of $f$, this corresponds to finding the uppermost intersection point of the function $f(k)$ with a line with slope $\lambda$.
Let this point be $(k', f(k'))$. Therefore we have computed $f(k')$. If $k>k'$, then we should in fact be looking for smaller $\lambda$, and if $k<k'$ we should be looking for larger $\lambda$.
This yields a binary search algorithm that eventually finds $(k,f(k))$ as the intersection of $f$ with a line with slope $\lambda$.
Therefore the total runtime will be $O(m \log(nM))$.
\end{prevproof}

\begin{prevproof}{Lemma}{khop}
Consider a complete DAG in which for any $i<j$, $w(i,j) = (i-j)^2$, and we wish to minimize the total weight (which is the same as maximizing the total weight if we set weights to $-(i-j)^2$).
To see that these weights are Monge, note that 
\[ (i-j)^2 + (i + 1 - (j+1))^2 - (i - (j+1))^2 - (i + 1 - j)^2  \]
\[= -2 (ij+(i+1)(j+1) - (i+1)j - i(j+1))  \]
\[=  -4 < 0 \]
We will prove that the set of edges used by the solutions has size $\Omega(n^{1.5})$.
Note that the optimal path from $0$ to $n$ with $k$ edges will only contain edges of lengths $\lfloor \frac{n}{k}\rfloor$ and $\lceil \frac{n}{k}\rceil$.
To see this, note that otherwise the path must consist of two edges whose lengths differ by at least $2$. Suppose these lengths are $a$ and $b\geq a + 2$.
But since the cost of an edge depends just on its length, we can replace them by the edges $a+1$ and $b - 1$, thus decreasing the cost by
$a^2 + b^2 - (a+1)^2 - (b-1)^2 = 2 (b - a - 1) > 0$. 
We will suppose wlog that all the larger edges are in the beginning of the path, i.e. closer to $0$ than the smaller ones.

Now, consider any edge $(i,i+x)$. If the following three conditions are true, this edge is part of the solution.
\begin{itemize}
\item{$\lfloor \frac{i}{x}\rfloor \geq x - 1$}
\item{$k(x+1) \leq n$}
\item{$\lceil\frac{i+x}{x}\rceil \leq k$}
\end{itemize}

Suppose that these three conditions are met. We will create an optimal solution containing edge $(i,i+x)$. By the third condition, the path
$a,a+x,a+2x, \dots, i, i+x$, for some $0\leq a < x$ contains at most $k$ edges. Furthermore, since $a < x$, by the first condition we can
increase the lengths of the first $a$ edges by $1$, so that the new path starts at $0$. Finally, extend this path to the right by length-$x$ edges so that it is a $k$-hop path. This can be done
because of the second condition. Therefore we have a path that is optimal for some endpoint and contains $(i,x)$.

This means that the total number of edges is at least
$\sum\limits_{x=1}^{\lfloor\frac{n}{k}\rfloor - 1} \left[ (k-1)x - x(x-1) \right] \geq \Theta(n^{3/2})$,
where we have picked $k = \Theta(\sqrt{n})$ in order to maximize the sum.

Now adding some small amount of arbitrary noise to all the edges ensures that we have to look at all the edges in the solution just to compute the weights of all the solutions.

\end{prevproof}

\begin{prevproof}{Lemma}{forall}

The proof is based on the following lemma.
\begin{lemma}
Consider a DAG with Monge weights, nodes indexed by $0,\dots, n$ in order.
Let $P:=(P_0=0, P_1, \dots, P_{k_1}=n)$ be a maximum reward path from $0$ to $n$,
and $Q:=(Q_0=0, Q_1, \dots, Q_{k_2}=n)$ be a maximum reward path from $0$ to $n$ but
in the DAG where all edge weights are increased by the same positive number (obviously the DAG is still Monge).
There exists a choice of $Q$ such that $Q_1 \geq P_1$.
\label{monotonicity}
\end{lemma}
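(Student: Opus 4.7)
My plan is to prove the lemma via an exchange argument built on the Monge inequality. I will suppose for contradiction that every optimal path $Q$ in the modified DAG satisfies $Q_1 < P_1$, and then construct another optimal path $Q'$ with $Q'_1 \geq P_1$, yielding a contradiction.

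The first step is to locate a ``splice point'' on $Q$: since $Q_0 = 0 < P_1$ and $Q_{k_2} = n \geq P_1$, there is a smallest index $j$ with $Q_j \geq P_1$, and by our supposition $j \geq 2$, so $Q_{j-1} < P_1 \leq Q_j$. I then form two hybrid paths: $P'' := (0, Q_1, \ldots, Q_{j-1}, P_1, P_2, \ldots, P_{k_1})$, which follows $Q$'s prefix up to $Q_{j-1}$ and then $P$'s suffix from $P_1$; and $Q' := (0, P_1, Q_j, Q_{j+1}, \ldots, Q_{k_2})$ (which degenerates to $(0, Q_j, Q_{j+1}, \ldots, Q_{k_2})$ in the boundary case $P_1 = Q_j$). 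By construction, $Q'_1 = P_1$.

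The central calculation combines two inequalities. From $P$'s optimality in the original DAG, $r(P) \geq r(P'')$ rearranges to
\[
w(0, P_1) \;\geq\; \sum_{i=0}^{j-2} w(Q_i, Q_{i+1}) + w(Q_{j-1}, P_1).
\]
Applying the Monge inequality to the $2 \times 2$ submatrix indexed by $\{Q_{j-1}, P_1\} \times \{P_1, Q_j\}$ yields the subadditivity $w(Q_{j-1}, Q_j) \leq w(Q_{j-1}, P_1) + w(P_1, Q_j)$. Adding these two inequalities and canceling the common prefix/suffix sums shows $r(Q') \geq r(Q)$ in the original DAG. Since $Q'$ uses $j - 2$ fewer edges than $Q$, the common per-edge shift defining the modified DAG works in $Q'$'s favor (the Lagrangian penalty per edge is less, since $Q'$ has fewer edges), so $Q'$ also has modified reward at least that of $Q$. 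Hence $Q'$ is optimal in the modified DAG with $Q'_1 = P_1$, contradicting the supposition.

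The main technical obstacle is the subadditivity step: the Monge inequality $w(Q_{j-1}, P_1) + w(P_1, Q_j) \geq w(P_1, P_1) + w(Q_{j-1}, Q_j)$ requires the diagonal entry $w(P_1, P_1)$ to be at most zero, which is a standard convention when viewing the Monge adjacency as a transitive structure. A secondary issue is edge feasibility: the constructed paths use edges such as $(Q_{j-1}, P_1)$ and $(P_1, Q_j)$, which must exist in the underlying DAG; this is handled by treating the Monge matrix as defining a complete weighted structure (as is standard in the Monge-path literature and consistent with the use of this lemma in Lemma~\ref{forall}).
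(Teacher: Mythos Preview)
Your argument has a genuine gap at the subadditivity step. The inequality you need,
\[
w(Q_{j-1},Q_j)\ \le\ w(Q_{j-1},P_1)+w(P_1,Q_j),
\]
is a triangle inequality, and it does \emph{not} follow from the Monge property. Applying Monge to the $2\times 2$ block with rows $\{Q_{j-1},P_1\}$ and columns $\{P_1,Q_j\}$ only gives
\[
w(Q_{j-1},P_1)+w(P_1,Q_j)\ \ge\ w(P_1,P_1)+w(Q_{j-1},Q_j),
\]
so you would need $w(P_1,P_1)\ge 0$ (not $\le 0$ as you wrote). In a DAG the diagonal entry is a non-edge and is conventionally $-\infty$, which makes the inequality vacuous. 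More directly, take $w(i,j)=-1$ for all $i<j$: this matrix is Monge in the paper's sense, yet $w(0,2)=-1>-2=w(0,1)+w(1,2)$, so your subadditivity fails. Thus the step cannot be salvaged by a convention on the diagonal.

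There is a second issue with the edge-count argument. Your $Q'$ has $j-2$ fewer edges than $Q$, and the lemma specifies that the modified DAG has all weights \emph{increased} by a positive amount $\lambda$. Hence the shift is a per-edge \emph{bonus}, not a penalty, so having fewer edges hurts $Q'$ in the modified DAG: from $r(Q')\ge r(Q)$ you only get $r'(Q')-r'(Q)=\bigl(r(Q')-r(Q)\bigr)-(j-2)\lambda$, which need not be nonnegative when $j>2$. The paper's proof sidesteps both problems: it locates a genuine crossing $Q_x<P_x<P_{x+1}<Q_{x+1}$ and applies Monge to that quadruple of \emph{distinct} indices (no diagonal entry appears), then swaps suffixes to produce a $Q'$ with exactly the same number of edges as $Q$, so the additive shift cancels. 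Your splice at $P_1$ mixes a prefix of $Q$ with a single vertex of $P$ and changes the edge count, which is precisely what makes both the Monge step and the Lagrangian transfer break down.
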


\begin{proof}
It is easy to see that it suffices to show this for $k_1 = k_2 + 1$, so let $k_2 = k$ and $k_1 = k + 1$.
Suppose that $P_1 > Q_1$.
Such a pair of paths $P, Q$ can be equivalently described as follows:
Let's visualize the paths in their topological order, and scan with a vertical line from left to right, while keeping a point $(x,y)$ in the integer plane, starting from $(0,0)$.
Every time the vertical line meets an endpoint of some edge in $P$ we move from $(x,y)$ to $(x,y+1)$, every time it meets and endpoint of some edge in $Q$ we move to
$(x+1,y)$, and if it meets a common endpoint of both we move to $(x+1,y+1)$. It is easy to see that at any time, $x$ (resp. $y$) is the number of edges of $Q$ (resp. $P$) already encountered
by the
vertical line.
Therefore, since the size of $P$ is $k+1$ and the size of $Q$ is $k$, we will end up at $(k,k+1)$. Now, $P_1 > Q_1$ states the fact that after $(0,0)$ we move to $(1,0)$ and this
means that eventually we will have to cross the line $x=y$.

Furthermore, if we ever move on that line, say e.g. from some $(x,x)$ to $(x+1,x+1)$, 
this means that the paths seen so far are interchangeable and so the path
$Q' := (P_0, \dots, P_{x+1}, Q_{x+2}, \dots, Q_k)$ has the same number of hops and weight as $Q$, but also $Q_1' = P_1$.

Otherwise, at some point we have to move from some $(x,x-1)$ to $(x,x)$ and then to $(x,x+1)$. 
By the Monge property, this means that $Q_x < P_x < P_{x+1} < Q_{x+1}$ and so 
\[ w(Q_x, Q_{x+1}) + w(P_x, P_{x+1}) \geq w(Q_x, P_{x+1}) + w(P_x, Q_{x+1}) \]
So if we define the paths
\[P' := (Q_{0}, \dots, Q_x, P_{x+1}, \dots, P_{k+1}) \] and
\[Q' := (P_{0}, \dots, P_x, Q_{x+1}, \dots, Q_k)\]
by the Monge property and optimality of $P, Q$ we know that $P'$ has the same weight as $P$ and $Q'$ the same weight as $Q$.
Furthermore, $Q'$ also has $Q_1' = P_1$.
\end{proof}

Suppose that we add the number $\lambda$ to all the weights and then find the maximum-weight path that ends at node $n$. This path has $f(\lambda)$ edges for some function 
$f:\mathbb{R}\rightarrow [n]$. It is easy to see that the function $f$ is decreasing and takes all values in $[n]$ for which there exists a path of that length from $0$ to $n$. 
Furthermore, if $w(\lambda)$ is the maximum weight of a path $P$ after adding $\lambda$ to all weights, we know that $w(\lambda) = w(P) + \lambda f(\lambda)$. So among all paths
with $f(\lambda)$ edges, $P$ maximizes $w(\lambda)$, so it also maximizes $w(P)$. Therefore, that path is the maximum weight path with $f(\lambda)$ hops.

Suppose that $a_\lambda(i)$ is the node after $i$ in this path, and that $l_\lambda(i)$ (resp. $r_\lambda(i)$) 
is the number of edges of the form $(i,\star)$ that are shorter (resp. longer) than $(i,a_i)$.

By Lemma~\ref{monotonicity} we know that when looking at paths with $\lambda' > \lambda$, we will have $a_i' \geq a_i$, and when looking at paths with $\lambda' < \lambda$ we will have
$a_i' \leq a_i$. This basically splits our edge set into two subsets, and we can recurse on both of them.
Therefore, we would like to pick $\lambda$ so as to split them as evenly as possible, which we can do by binary search on $\lambda$, each time computing a shortest path on the DAG.
Note that there will always exist a balanced split, since any edge that is not part of a shortest path for any $\lambda$ can be discarded.
Let $r(m)$ be the amount of time the algorithm takes, given a DAG with $m$ edges, plus $n$ edges (one outgoing edge for each vertex). Note that the $n$ edges define an arborescence $A$
that for some choice of $\lambda$ was the shortest path tree of the DAG.
In order to compute the shortest path tree in such a DAG, we run a Breadth-First search using only the $m$ edges, and each time we encounter a path that is shorter than the respective path
in $A$, we update $A$ by substituting an edge $(u,v)$ with another edge $(u',v)$. This takes $O(m)$ time.
Therefore if we denote by $r(m)$ the time the algorithm takes when the number of non-arborescence edges is $m$,
the recursion to (implicitly) compute the shortest path trees for all choices of $\lambda$ 
will be 
\[ r(m) = 2r(m/2) + m\log (nM)\]
so $r(m) = O(m\log n \log (nM))$ and so the total time is $O(m\log n\log (nM))$.

Note that each leaf of this recursion exactly corresponds to an implicit shortest path tree, for a particular value of $\lambda$. In order to reconstruct the shortest path from $0$
to some node $u$ with exactly $k$ edges, we traverse the recursion tree top-down, moving to the left child (smaller $\lambda$) if the number of edges in the current path from $0$ to $u$ is
less than $k$, or to the right child if the number of edges is is more than $k$. 
This way, we can compute the shortest paths of $k$ hops from $0$ to each node, in time $O(m\log n \log (nM))$.
\end{prevproof}

\end{document}